 \newtheorem{theorem}{Theorem}[section]
\newtheorem{Legendre theorem}{Legendre Theorem}[section]
\newtheorem{lemma}[theorem]{Lemma}
\newtheorem{corollary}[theorem]{Corollary}
\newcommand{\Tr}{{\rm Tr}}
\newcommand{\gf}{ {{\mathbb F}} }
\begin{document}

\begin{frontmatter}

%% Title, authors and addresses

%% use the tnoteref command within \title for footnotes;
%% use the tnotetext command for the associated footnote;
%% use the fnref command within \author or \address for footnotes;
%% use the fntext command for the associated footnote;
%% use the corref command within \author for corresponding author footnotes;
%% use the cortext command for the associated footnote;
%% use the ead command for the email address,
%% and the form \ead[url] for the home page:
%%
%% \title{Title\tnoteref{label1}}
%% \tnotetext[label1]{}
%% \author{Name\corref{cor1}\fnref{label2}}
%% \ead{email address}
%% \ead[url]{home page}
%% \fntext[label2]{}
%% \cortext[cor1]{}
%% \address{Address\fnref{label3}}
%% \fntext[label3]{}

\title{Further Results on Permutation Polynomials over Finite Fields}
\tnotetext[fn1]{P. Yuan's research was supported by the NSF of China (Grant No. 11271142) and
the Guangdong Provincial Natural Science Foundation(Grant No. S2012010009942). C. Ding's research was supported by
The Hong Kong Research Grants Council, Proj. No. 601013.}

%% use optional labels to link authors explicitly to addresses:
%% \author[label1,label2]{<author name>}
%% \address[label1]{<address>}
%% \address[label2]{<address>}

\author[ypz]{Pingzhi Yuan}
\ead{mcsypz@mail.sysu.edu.cn}
\author[cding]{Cunsheng Ding}
\ead{cding@ust.hk}

%\cortext[zcz]{Corresponding author}
\address[ypz]{School of Mathematics, South China Normal University, Guangzhou 510631, China}
\address[cding]{Department of Computer Science
                                                  and Engineering, The Hong Kong University of Science and Technology,
                                                  Clear Water Bay, Kowloon, Hong Kong, China}

\begin{abstract}
Permutation polynomials are an interesting subject of mathematics and have applications in other areas of
mathematics and engineering.
In this paper,  we develop general theorems on permutation polynomials over finite fields. As a demonstration
of the theorems, we present a number of classes of explicit permutation polynomials on $\gf_q$.

\end{abstract}

\begin{keyword}
Cyclic codes \sep polynomials \sep permutation polynomials \sep skew Hadamard difference sets.
%% PACS codes here, in the form: \PACS code \sep code

%% MSC codes here, in the form: \MSC code \sep code
%% or \MSC[2008] code \sep code (2000 is the default)
\MSC  11C08 \sep 12E10

\end{keyword}

\end{frontmatter}

\section{Introduction}

Let  $\gf_q$ be the finite field with $q$ elements,  where $q$ is a prime power, and
let $\gf_q[x]$
be the ring of polynomials in a single indeterminate $x$ over $\gf_q$. A polynomial
$f \in\gf_q[x]$ is called a {\em permutation polynomial} (PP) of $\gf_q$ if it induces
a one-to-one map from $\gf_q$ to itself. A permutation on $\gf_q$ is a bijection 
from $\gf_q$ to itself. It is well known that every permutation on $\gf_q$ can be 
expressed as a permutation polynomial over $\gf_q$. 

Permutation polynomials over finite fields have been a hot topic of study for many  
years, and have applications in coding theory \cite{DH13,LC07}, cryptography 
\cite{LM84,SH,RSA}, combinatorial designs \cite{DY06}, and other areas of mathematics 
and engineering. For example, Dickson permutation polynomials of order five, i.e.,  
$D_5(x, a)=x^5+ax^3-a^2x$ over finite fields,  led to a 70-year research breakthrough in 
combinatorics \cite{DY06}, gave a family of perfect nonlinear functions for cryptography 
\cite{DY06}, generated good linear codes \cite{CDY,YCD} for data communication and 
storage, and produced optimal signal sets for CDMA communications \cite{DingYin}, to 
mention only a few applications of these Dickson permutation polynomials.     
Information on constructions,
properties and applications of permutation polynomials may be
found in Lidl and Niederreiter \cite{LR97}, and Mullen \cite{Mull}.

The trace function $\Tr(x)$ from $\mathbb{F}_{q^n}$ to $\mathbb{F}_{q}$
is defined by
$\Tr(x)=x+x^q+x^{q^{2}}+\cdots+x^{q^{n-1}}.$
A number of classes of permutation polynomials related to the trace
functions were constructed in \cite{CK08, CK09,Ky10,Ma10,Zi10}.

Recently, Akbary, Ghioca and Wang derived a lemma about permutations on finite sets \cite{AGW}, 
which contains Lemma 2.1 in \cite{Zi08} and Proposition 3 in \cite{Zi10} as special cases, and 
employed this lemma to unify some earlier constructions and developed new constructions of
permutation polynomials over finite fields.
In \cite{YD}, with this lemma we derived several theorems about permutation
polynomials over finite fields. These
theorems give not only a further unified treatment of some of the earlier
constructions of permutation polynomials, but also new specific permutation
polynomials.

In this paper,  we continue our investigations in \cite{YD} by employing this lemma in \cite{AGW} again.
We  first develop generic theorems on permutation polynomials over finite fields with this powerful lemma.
We then construct new permutation polynomials of explicit forms.

\section{Auxiliary results \& the main Lemma}

In this section, we present some auxiliary results that will be needed in the
sequel. Throughout this paper $p$ is a prime and $q=p^e$ for a positive integer $e$.

A polynomial of the form
$$
L(x)= \sum_{i=0}^{n-1} a_{i}x^{q^{i}} \in\mathbb{F}_{q^n}[x]
$$
is called a {\em $q$-polynomial} over $\gf_{q^n}$, and is a
permutation polynomial on $\mathbb{F}_{q^n}$ if and only if the
circulant matrix
\begin{eqnarray}\label{eqn-matP}
A=\left(\begin{array}{lllll}
               a_0 &a_1 &a_2&\cdots &a_{n-1}\\
               a_{n-1}^q &a_0^q   &a_1^q  &\cdots &a_{n-2}^q\\
               a_{n-2}^{q^2} &a_{n-1}^{q^2}   &a_{0}^{q^2} &\cdots & a_{n-3}^{q^2}\\
               \cdots\\
               a_{1}^{q^{n-1}} &a_{2}^{q^{n-1}}&a_3^{q^{n-1}}&\cdots&a_0^{q^{n-1}}\end{array}\right)
\end{eqnarray}
has nonzero determinant (see [9, p.362]). In most cases it is not convenient to
use this result to find out permutation $q$-polynomials, as it may be hard
to determine if the determinant of this matrix is nonzero \cite{DXY}. Hence
it would be interesting to develop other approaches to the construction of
permutation $q$-polynomials.

In the sequel we need the following Lemma whose proof is straightforward.

\begin{lemma}
Let $L(x)= \sum_{i=0}^{n-1} a_{i}x^{q^{i}}  \in\mathbb{F}_{q}[x]$
be a $q$-polynomial and let $\Tr(x)$ be the trace function from  $\mathbb{F}_{q^n}$
to $\mathbb{F}_{q}$.
Then, for each $\alpha\in \mathbb{F}_{q^n}$, we have
$$L(\Tr(\alpha))=\Tr(L(\alpha))=\left(\sum_{i=0}^{n-1} a_{i}\right)\Tr(\alpha).$$
\end{lemma}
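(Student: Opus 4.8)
The plan is to prove the two outer equalities separately and note that both land on the common middle expression $\left(\sum_{i=0}^{n-1}a_i\right)\Tr(\alpha)$. The whole argument rests on two elementary facts about the base field: every element of $\mathbb{F}_q$—in particular each coefficient $a_i$ and the value $\Tr(\alpha)$, since $\Tr$ maps $\mathbb{F}_{q^n}$ into $\mathbb{F}_q$—is fixed by the $q$-power Frobenius, and the $q$-power map is additive in characteristic $p$. The equality $L(\Tr(\alpha))=\left(\sum_{i=0}^{n-1}a_i\right)\Tr(\alpha)$ is then immediate: substituting $x=\Tr(\alpha)$ into $L$ and using $\Tr(\alpha)^{q^i}=\Tr(\alpha)$ for every $i$ collapses each monomial $a_i\Tr(\alpha)^{q^i}$ to $a_i\Tr(\alpha)$, and factoring out $\Tr(\alpha)$ gives the claim at once.

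The main work lies in the equality $\Tr(L(\alpha))=\left(\sum_{i=0}^{n-1}a_i\right)\Tr(\alpha)$. Here I would start from $\Tr(L(\alpha))=\sum_{j=0}^{n-1}L(\alpha)^{q^j}$ and expand each term. Because the $q^j$-power map is additive and fixes each $a_i\in\mathbb{F}_q$, one obtains $L(\alpha)^{q^j}=\sum_{i=0}^{n-1}a_i\alpha^{q^{i+j}}$, so that $\Tr(L(\alpha))=\sum_{i=0}^{n-1}a_i\sum_{j=0}^{n-1}\alpha^{q^{i+j}}$. The crux is to evaluate the inner sum: for each fixed $i$, as $j$ ranges over $0,\dots,n-1$ the exponent $q^{i+j}$ runs, in some order, over $q^0,\dots,q^{n-1}$ modulo the relation $\alpha^{q^n}=\alpha$ valid for $\alpha\in\mathbb{F}_{q^n}$, so the inner sum equals $\Tr(\alpha)$ independently of $i$. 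Factoring out $\Tr(\alpha)$ then yields $\left(\sum_{i=0}^{n-1}a_i\right)\Tr(\alpha)$.

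The only step requiring any care is this reindexing of the inner sum—recognizing that the cyclic shift $j\mapsto i+j \pmod n$ merely permutes the Frobenius conjugates $\alpha,\alpha^q,\dots,\alpha^{q^{n-1}}$ and hence leaves their sum, the trace, unchanged. Everything else is routine bookkeeping with the Frobenius, which is consistent with the paper's remark that the proof is straightforward.
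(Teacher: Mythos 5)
Your proof is correct, and it is precisely the routine Frobenius computation the paper has in mind when it states the lemma ``whose proof is straightforward'' and omits the details: both outer expressions collapse to $\left(\sum_{i=0}^{n-1}a_i\right)\Tr(\alpha)$ using that the $a_i$ and $\Tr(\alpha)$ lie in $\mathbb{F}_q$ (hence are Frobenius-fixed), that the $q^j$-power map is additive, and that the cyclic reindexing $j\mapsto i+j\pmod n$ permutes the conjugates $\alpha^{q^j}$ via $\alpha^{q^n}=\alpha$. Nothing is missing; the reindexing step you flag as the only delicate point is handled correctly.
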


The polynomials
$$l(x)=\sum_{i=0}^ma_ix^i \qquad {\mbox and}\qquad
L(x)=\sum_{i=0}^ma_ix^{q^i}$$ over  $\mathbb{F}_{q^n}$ are called the
$q$-associate of each other. More specifically, $l(x)$ is the
\emph{conventional} $q$-associate of $L(x)$ and $L(x)$ is the linearized
$q$-associate of $l(x)$ \cite[p. 115]{LR86}.

The following lemma is also needed in the sequel.

\begin{lemma}\label{lem-gcdaa} {\rm (\cite[p. 109]{LR86})}
Let $L_1(x)$ and $L_2(x)$ be two $q$-polynomials over $\mathbb{F}_q$,
and let $l_1(x)$ and $l_2(x)$ be the $q$-associate polynomials over
$\mathbb{F}_q$. Then the common roots of $L_1(x)=0$ and $L_2(x)=0$
are all the roots of the linearized $q$-associate of $\gcd(l_1(x),
l_2(x))$. In particular, $x=0$ is the only common root of $L_1(x)=0$
and $L_2(x)=0$ in any finite extension of  $\mathbb{F}_q$ if and
only if $\gcd(l_1(x), l_2(x))=1$.
\end{lemma}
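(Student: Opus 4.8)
The plan is to reduce the statement to the symbolic calculus of linearized polynomials. The first and central step I would take is to record that the map sending a polynomial $\sum_i a_i x^i$ to its linearized $q$-associate $\sum_i a_i x^{q^i}$ is an isomorphism of $\mathbb{F}_q$-algebras from $\mathbb{F}_q[x]$, with ordinary multiplication, onto the set of $q$-polynomials over $\mathbb{F}_q$ under addition and composition; here the ordinary product $l_1 l_2$ corresponds to the composition $L_1(L_2(x))$. I expect the verification of this correspondence to be the main obstacle: composing $x^{q^i}$ with $b\,x^{q^j}$ gives $b^{q^i} x^{q^{i+j}}$, so the multiplication rule matches only because every coefficient lies in $\mathbb{F}_q$ and is therefore fixed by the Frobenius map $y\mapsto y^q$. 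Once this is settled, the commutativity of composition for $q$-polynomials over $\mathbb{F}_q$ and the rest of the argument are purely formal.

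Next, set $d(x)=\gcd(l_1(x),l_2(x))$ and let $D(x)$ be its linearized $q$-associate. To show that every common root of $L_1$ and $L_2$ is a root of $D$, I would apply B\'ezout in $\mathbb{F}_q[x]$ to write $d=a\,l_1+b\,l_2$ and transport this through the isomorphism to the identity $D(x)=A(L_1(x))+B(L_2(x))$, where $A$ and $B$ are the linearized associates of $a$ and $b$. If $L_1(\beta)=L_2(\beta)=0$, then $D(\beta)=A(0)+B(0)=0$, since a $q$-polynomial has no constant term. For the opposite inclusion I would use that $d$ divides each $l_i$, so that (using commutativity of composition) the isomorphism yields $L_i(x)=C_i(D(x))$ for suitable $q$-polynomials $C_i$; hence $D(\beta)=0$ forces $L_i(\beta)=C_i(0)=0$. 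Together these inclusions give that the common roots of $L_1$ and $L_2$ are exactly the roots of $D$, which is the first assertion.

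Finally I would deduce the ``in particular'' clause. One direction is immediate: if $\gcd(l_1,l_2)=1$, then $d(x)=1$, its associate is $D(x)=x$, and the only common root is $0$. For the converse, the point to establish is that a gcd of positive degree forces a nonzero common root. When $d(0)\neq 0$ the polynomial $D$ is separable, its roots in a splitting field form an $\mathbb{F}_q$-subspace of dimension $\deg d\geq 1$, and so a nonzero common root exists. The step I would scrutinize most carefully is this passage from the degree of $d$ to the dimension of the root space of $D$, and in particular the degenerate possibility that $x$ divides $d$ — where $D$ acquires the inseparable factor $x^{q^j}$ — since this is exactly where the clean correspondence between $\deg d$ and the number of distinct roots ceases to be merely formal and must be handled with care.
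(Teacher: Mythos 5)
The first thing to note is that the paper contains no proof of this lemma at all: it is cited from Lidl and Niederreiter with a page reference, so your argument can only be measured against the standard literature treatment, not against anything internal to the paper. For the main assertion your proof is correct and is essentially that standard treatment: the $q$-associate map is an $\mathbb{F}_q$-algebra isomorphism taking ordinary multiplication to composition (your observation that this hinges on the coefficients being fixed by $y\mapsto y^q$ is exactly the right point), B\'ezout's identity $d=a\,l_1+b\,l_2$ transports to $D=A\circ L_1+B\circ L_2$ and gives one inclusion, and the divisibility $d\mid l_i$ transports to $L_i=C_i\circ D$ and gives the other. This cleanly establishes that the common roots of $L_1$ and $L_2$ are exactly the roots of $D$.

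The gap is in the ``in particular'' clause, and you flagged its exact location without closing it. You prove that $\gcd(l_1,l_2)=1$ implies the only common root is $0$, and for the converse you handle only the case $d(0)\neq 0$, where separability of $D$ (its formal derivative is the constant $d(0)$) plus the $\mathbb{F}_q$-subspace structure of its root set yields a nonzero common root in the splitting field. The degenerate case $x\mid d$ that you promise to ``handle with care'' is not a removable technicality: the converse is actually false there. Take $l_1=l_2=x$, so $L_1=L_2=x^q$; then $0$ is the only common root in any extension of $\mathbb{F}_q$, yet $\gcd(l_1,l_2)=x\neq 1$. In general, writing $d=x^j e$ with $e(0)\neq 0$, one has $D(x)=E\bigl(x^{q^j}\bigr)$, and since $x\mapsto x^{q^j}$ is an automorphism of the algebraic closure, the roots of $D$ are in bijection with those of $E$; your separability argument applied to $E$ then shows that nonzero common roots exist if and only if $\deg e\geq 1$. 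So the correct equivalence is: $0$ is the only common root if and only if $\gcd(l_1,l_2)$ is a power of $x$ (possibly $x^0=1$), and the lemma as stated is slightly too strong. This defect is harmless for the paper: the one place the lemma is invoked (the concluding remarks, constructing linearized PPs from $l$ with $\gcd(l,x^n-1)=1$) uses only the implication you did prove, and there $l_2=x^n-1$ is coprime to $x$, so the degenerate case cannot arise.
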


The following lemma was developed by Akbary, Ghioca, and Wang  \cite[Lemma 1.1]{AGW},
and contains Lemma 2.1 in \cite{Zi08} and Proposition 3 in \cite{Zi10} as special cases. It will 
be frequently employed in the sequel.

\begin{lemma}\label{lem-1.1}
Let $A, S$ and $\bar{S}$ be finite sets with $\sharp
S=\sharp\bar{S}$, and let $f:A\rightarrow A$, $h:
S\rightarrow\bar{S}$, $\lambda: A\rightarrow S$, and $\bar{\lambda}:
A\rightarrow \bar{S}$ be maps such that $\bar{\lambda}\circ
f=h\circ\lambda$. If both $\lambda$ and $\bar{\lambda}$ are
surjective, then the following statements are equivalent:

(i) $f$ is bijective (a permutation of $A$); and

(ii) $h$ is bijective from $S$ to $\bar{S}$ and $f$ is
injective on $\lambda^{-1}(s)$ for each $s\in S$.
\end{lemma}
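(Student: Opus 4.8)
The plan is to prove the equivalence of (i) and (ii) directly from the commutative square $\bar{\lambda}\circ f = h\circ\lambda$, exploiting the surjectivity of $\lambda$ and $\bar\lambda$ together with the finiteness and equicardinality of $S$ and $\bar S$. The central observation is that the square forces $f$ to respect the fiber structure of $\lambda$: whenever $\lambda(a_1)=\lambda(a_2)=s$, we get $\bar\lambda(f(a_1))=h(s)=\bar\lambda(f(a_2))$, so $f$ carries the fiber $\lambda^{-1}(s)$ into the single fiber $\bar\lambda^{-1}(h(s))$. All of the bookkeeping will be organized around this fiber picture, and I would set up notation for the fibers at the outset.

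First I would prove the implication (ii)$\Rightarrow$(i). Assume $h$ is a bijection and $f$ is injective on each fiber $\lambda^{-1}(s)$. Since $A$ is finite, it suffices to show $f$ is injective. Suppose $f(a_1)=f(a_2)$. Applying $\bar\lambda$ gives $h(\lambda(a_1))=h(\lambda(a_2))$ via the commuting square, and injectivity of $h$ yields $\lambda(a_1)=\lambda(a_2)=:s$; thus $a_1,a_2$ lie in the same fiber $\lambda^{-1}(s)$, where $f$ is assumed injective, so $a_1=a_2$. Hence $f$ is injective, and being a self-map of the finite set $A$, it is bijective.

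For the converse (i)$\Rightarrow$(ii), assume $f$ is a permutation of $A$. Injectivity of $f$ on each fiber is then immediate since $f$ is globally injective. The real content is the bijectivity of $h$, and this is where I expect the main obstacle to lie: one must transfer bijectivity of $f$ across the (generally non-injective) maps $\lambda$ and $\bar\lambda$. The plan is to first establish that $h$ is surjective. Given $\bar t\in\bar S$, surjectivity of $\bar\lambda$ gives $\bar t=\bar\lambda(a)$ for some $a\in A$; since $f$ is surjective, $a=f(a')$ for some $a'$, and then $\bar t=\bar\lambda(f(a'))=h(\lambda(a'))$ lands in the image of $h$. Thus $h$ is surjective from $S$ onto $\bar S$. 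Finally, since $\sharp S=\sharp\bar S<\infty$, a surjection between finite sets of equal cardinality is automatically a bijection, so $h$ is bijective. This completes both directions.

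The delicate point worth stating carefully is precisely the use of equicardinality: without $\sharp S=\sharp\bar S$ one could only conclude $h$ is surjective, not bijective, so the hypothesis $\sharp S=\sharp\bar S$ is exactly what upgrades the surjection to a bijection. I would flag this explicitly, since it is the one place where a purely formal diagram chase does not suffice and the finiteness assumptions genuinely enter.
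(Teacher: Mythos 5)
Your proof is correct and complete: the direction (ii)$\Rightarrow$(i) is a clean diagram chase using injectivity of $h$ to force $a_1,a_2$ into a common fiber, and in (i)$\Rightarrow$(ii) you correctly isolate the only nontrivial point, namely that surjectivity of $h$ (obtained from surjectivity of $f$ and $\bar\lambda$) is upgraded to bijectivity by the hypothesis $\sharp S=\sharp\bar S$ on finite sets. Note that the paper itself gives no proof of this lemma --- it is quoted as Lemma 1.1 of Akbary, Ghioca, and Wang \cite{AGW} --- so there is no internal argument to compare against; your argument is the standard one for that result and can stand as a self-contained proof.
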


\section{Generic theorems on permutation polynomials}\label{sec-mainthms}

 The following lemma is an application of Lemma \ref{lem-1.1}, and
  is a variant of Theorem 1.4 (c) and Theorem 5.1 (c) in \cite{AGW}.

 \begin{lemma}\label{thm-main31}(\cite{YD} Theorem 6.1)
 Assume that $A$ is a  finite field and $S, \bar{S}$ are finite subsets  of $A$ with $\sharp(S)=\sharp(\bar{S})$ such that the
 maps
 $\psi: A\to S$ and $\bar{\psi}: A \to\bar{S}$ are surjective and $\bar{\psi}$ is additive, i.e.,
 $$\bar{\psi}(x+y)=\bar{\psi}(x)+\bar{\psi}(y) \mbox{ for all } x, y \in A.$$ Let $f: A \to A$ and $h: S \to\bar{S}$
 be maps such that the following diagram commutes:
$$
\xymatrix{
  A \ar[d]_{\psi} \ar[r]^{f}
                & A \ar[d]^{\bar{\psi}}  \\
  S \ar[r]_{h}
                & \bar{S}      }
$$
Then for any map $g: S\to A$, the map
$p(x)=f(x)+g(\psi(x))$ permutes $A$ if and only if

  i) $h$ is a bijection; and

  ii) $f$ is a injection on $\psi^{-1}(s)$ for every $s\in S$.

  Furthermore, if $ \bar{\psi}(g(\psi(x)))=0$ for every $x\in A$, then the map
$p(x)=f(x)+g(\psi(x))$ permutes $A$ if and only if
  $f$ permutes  $A$.

\end{lemma}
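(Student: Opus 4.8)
The plan is to derive the stated equivalence as a direct application of Lemma~\ref{lem-1.1} to the map $p$ itself, taking $\lambda=\psi$ and $\bar\lambda=\bar\psi$ (both surjective by hypothesis). The one thing that must be checked before Lemma~\ref{lem-1.1} applies is that $\bar\psi\circ p$ factors through $\psi$; once this is done the equivalence (i)--(ii) reads off almost immediately, and the ``furthermore'' clause becomes the special case in which the factoring map coincides with $h$.

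First I would compute $\bar\psi(p(x))$. Writing $p(x)=f(x)+g(\psi(x))$ and using that $\bar\psi$ is additive gives
\[
\bar\psi(p(x))=\bar\psi(f(x))+\bar\psi\bigl(g(\psi(x))\bigr)=h(\psi(x))+\bar\psi\bigl(g(\psi(x))\bigr),
\]
where the second equality is the commutativity of the given square, $\bar\psi\circ f=h\circ\psi$. The right-hand side depends on $x$ only through $\psi(x)$, so $\bar\psi\circ p=h^{*}\circ\psi$ for the map $h^{*}\colon S\to\bar S$ defined by $h^{*}(s)=h(s)+\bar\psi(g(s))$. I would check that $h^{*}$ really lands in $\bar S$: since $\psi$ is surjective, every $s\in S$ equals $\psi(x)$ for some $x$, whence $h^{*}(s)=\bar\psi(p(x))\in\bar S$ because $\bar\psi$ takes values in $\bar S$. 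This is the map through which the top edge of the square for $p$ factors, and it is the object that condition (i) controls.

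With the square $\bar\psi\circ p=h^{*}\circ\psi$ in hand, Lemma~\ref{lem-1.1} (applied with $f$ replaced by $p$ and $h$ by $h^{*}$) says that $p$ permutes $A$ if and only if $h^{*}$ is a bijection from $S$ to $\bar S$ and $p$ is injective on each fibre $\psi^{-1}(s)$. The fibre condition I would simplify at once: on $\psi^{-1}(s)$ the value $\psi(x)=s$ is constant, so $g(\psi(x))=g(s)$ is a fixed element of $A$ and $p|_{\psi^{-1}(s)}=f|_{\psi^{-1}(s)}+g(s)$ is merely a translate of $f$; a translate is injective exactly when $f$ is, which is condition (ii). This already yields the equivalence, condition (i) being the bijectivity of $h^{*}$; note that $h^{*}=h$ precisely when $\bar\psi(g(s))=0$ for all $s\in S$, i.e.\ when $\bar\psi\circ g\circ\psi\equiv 0$, which is exactly when (i) reduces to the stated ``$h$ is a bijection''.

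For the ``furthermore'' clause I would assume $\bar\psi(g(\psi(x)))=0$ for every $x\in A$; by surjectivity of $\psi$ this forces $\bar\psi(g(s))=0$ for all $s\in S$, hence $h^{*}=h$ and $\bar\psi\circ p=h\circ\psi$. Now both $f$ and $p$ sit in commuting squares with the \emph{same} triple $(\psi,\bar\psi,h)$, so Lemma~\ref{lem-1.1} characterises the bijectivity of each by the same pair of conditions: ``$h$ bijective'' together with injectivity on the fibres of $\psi$. Since $p$ and $f$ have identical fibrewise injectivity (they differ on each fibre by a constant), $f$ and $p$ satisfy these conditions simultaneously, giving ``$p$ permutes $A$ if and only if $f$ permutes $A$''. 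The only genuinely delicate point in the whole argument is the first step---verifying that $\bar\psi\circ p$ factors through $\psi$ and that the resulting $h^{*}$ maps into $\bar S$---for which additivity of $\bar\psi$ and surjectivity of $\psi$ are both essential; everything after that is bookkeeping on top of Lemma~\ref{lem-1.1}.
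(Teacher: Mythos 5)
Your argument is correct and is in essence the route the paper itself takes: note that the paper never reproves Lemma \ref{thm-main31} (it quotes it from \cite{YD}), but its proof of the variant Theorem \ref{thm-main3} runs exactly along your lines --- apply Lemma \ref{lem-1.1}, use additivity of $\bar{\psi}$ to identify the induced map on the bottom of the square, and use the fact that $g(\psi(x))$ is constant on each fibre $\psi^{-1}(s)$ to transfer fibrewise injectivity between $p$ and $f$. Your treatment of the ``furthermore'' clause is complete and needs no change.

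The point you half-make, and should make outright, is that your own computation shows the first equivalence cannot hold as printed for \emph{arbitrary} $g$: the square for $p$ has bottom map $h^{*}(s)=h(s)+\bar{\psi}(g(s))$, and condition (i) must refer to $h^{*}$, not to $h$; these genuinely differ in general. Concretely, take $A=\gf_4=\{0,1,\omega,\omega^2\}$, $\psi=\bar{\psi}=x^2+x$, $S=\bar{S}=\gf_2$, $f(x)=x$, $h(x)=x$, $g(0)=0$, $g(1)=\omega$. Then $h$ is a bijection and $f$ is injective on every fibre, yet $p(x)=x+g(x^2+x)$ sends both $0$ and $\omega$ to $0$, and both $1$ and $\omega^2$ to $1$, so it is not a permutation; consistently, $h^{*}(1)=1+\bar{\psi}(\omega)=0=h^{*}(0)$, so $h^{*}$ is not bijective. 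Thus the hypothesis $\bar{\psi}(g(\psi(x)))=0$, which the statement defers to the ``furthermore'' clause, is in fact needed for the first equivalence as well (equivalently, (i) must be read as bijectivity of $h^{*}$, which is exactly what you prove). You signal this by remarking when $h^{*}=h$, but you phrase it as if the stated equivalence were thereby obtained; the accurate conclusion is that you have proved the corrected statement, and that the lemma as printed, read literally for arbitrary $g$, is false.
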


The following theorem is another application of Lemma \ref{lem-1.1}, and
is a variant of Lemma \ref{thm-main31}.

 \begin{theorem}\label{thm-main3}
 Assume that $A$ is a  finite field and $S, \bar{S}$ are finite subsets  of $A$ with $\sharp(S)=\sharp(\bar{S})$ such that the
 maps
 $\psi: A\to S$ and $\bar{\psi}: A \to\bar{S}$ are surjective and $\bar{\psi}$ is additive, i.e.,
 $$\bar{\psi}(x+y)=\bar{\psi}(x)+\bar{\psi}(y) \mbox{ for all } x, y \in A.$$ Let $u: A\to A$  and $v: A \to A$
 be maps such that the following diagram commutes:
$$
\xymatrix{
  A \ar[d]_{\psi} \ar[r]^{u + v}
                & A \ar[d]^{\bar{\psi}}  \\
  S \ar[r]_{h}
                & \bar{S}      }
$$
Assume also that $ \bar{\psi}(v(x))=0$ for every $x\in A$ and $v(x)$ is a constant on each $\psi^{-1}(s)$
 for all $s \in S$.
 Then the map
$f(x)=u(x)+v(x)$ permutes $A$ if and only if
  $u$ permutes  $A$.

\end{theorem}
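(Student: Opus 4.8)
The plan is to recognize that this statement is, up to relabeling, the \emph{furthermore} clause of Lemma \ref{thm-main31}, once the additive hypothesis is used to transfer the commuting square from $u+v$ to $u$ and the fiberwise-constancy of $v$ is used to rewrite $v$ as a function of $\psi$. First I would note that the given diagram asserts $\bar{\psi}(u(x)+v(x)) = h(\psi(x))$ for all $x\in A$. Since $\bar{\psi}$ is additive and $\bar{\psi}(v(x))=0$ by hypothesis,
$$\bar{\psi}(u(x)) = \bar{\psi}(u(x)+v(x)) - \bar{\psi}(v(x)) = h(\psi(x)),$$
so $\bar{\psi}\circ u = h\circ\psi$. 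Thus $u$ sits in exactly the same commuting square as $f=u+v$, with the identical bottom map $h$; this is the step that genuinely consumes both the additivity of $\bar{\psi}$ and the condition $\bar{\psi}\circ v = 0$.

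Next I would convert $v$ into a function of $\psi$. Because $\psi$ is surjective and $v$ is constant on each fiber $\psi^{-1}(s)$, setting $g(s):=v(x)$ for an arbitrary $x\in\psi^{-1}(s)$ defines a map $g:S\to A$ with $v = g\circ\psi$. Hence $f(x)=u(x)+g(\psi(x))$ is precisely the shape treated in Lemma \ref{thm-main31}, and the hypothesis $\bar{\psi}(g(\psi(x)))=\bar{\psi}(v(x))=0$ of its furthermore part holds verbatim. Applying that part of Lemma \ref{thm-main31}, with $u$ playing the role of the lemma's clean map, yields at once that $f=u+v$ permutes $A$ if and only if $u$ permutes $A$.

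Alternatively, I could bypass Lemma \ref{thm-main31} and invoke the AGW Lemma \ref{lem-1.1} directly for both $u$ and $f$. Since both commute with the same $h$ (by the first paragraph), Lemma \ref{lem-1.1} gives that $u$ (respectively $f$) permutes $A$ if and only if $h$ is a bijection and $u$ (respectively $f$) is injective on each fiber $\psi^{-1}(s)$. On a fixed fiber $\psi^{-1}(s)$ the map $v$ takes a constant value $c_s$, so there $f = u + c_s$; as translation by $c_s$ is a bijection of $A$, the map $f$ is injective on $\psi^{-1}(s)$ exactly when $u$ is. The two permutation criteria therefore coincide, and the equivalence follows.

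I expect the proof to be essentially mechanical, so there is no serious obstacle; the only real content is the bookkeeping in transferring the commuting-square condition from $f$ to $u$ and the observation that $f$ and $u$, differing by a fiberwise constant, are simultaneously injective on each $\psi^{-1}(s)$. The subtlety to watch is purely notational — the symbol $f$ denotes the combined map $u+v$ here but the clean map in Lemma \ref{thm-main31} — so I would state the correspondence ($u\leftrightarrow$ lemma's clean map, $v\leftrightarrow g\circ\psi$) explicitly to avoid confusion.
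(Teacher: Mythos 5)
Your proposal is correct, and it in fact contains two complete arguments. The alternative route you sketch at the end is precisely the paper's own proof: apply Lemma \ref{lem-1.1} to the given square for $u+v$, use additivity of $\bar{\psi}$ together with $\bar{\psi}\circ v=0$ to see that the square with $u$ on top commutes with the \emph{same} bottom map $h$, apply Lemma \ref{lem-1.1} again, and then observe that since $v$ is constant on each fiber $\psi^{-1}(s)$, the maps $f=u+v$ and $u$ are simultaneously injective on every fiber, so the two permutation criteria coincide. Your primary route is genuinely different in organization: after the same transfer step, you factor $v$ as $g\circ\psi$ (well defined because $v$ is fiberwise constant and $\psi$ is surjective onto $S$) and then invoke the \emph{furthermore} clause of Lemma \ref{thm-main31} with $u$ playing the role of that lemma's clean map, since $\bar{\psi}(g(\psi(x)))=\bar{\psi}(v(x))=0$ holds verbatim. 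This is legitimate and shows that Theorem \ref{thm-main3} is not merely a ``variant'' of Lemma \ref{thm-main31}, as the paper describes it, but an outright corollary of it, the only new content being the transfer of the commuting square from $u+v$ to $u$ and the factorization $v=g\circ\psi$. What the paper's direct argument buys in exchange is self-containedness: it leans only on Lemma \ref{lem-1.1} rather than on the imported result from \cite{YD}. Both arguments consume each hypothesis (additivity, $\bar{\psi}\circ v=0$, fiberwise constancy, surjectivity) in the same places, and neither has a gap.
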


 \begin{proof}
 It follows from Lemma \ref{lem-1.1} and the assumptions of this theorem
 that $u(x)+ v(x)$ permutes $A$ if and only if $h$ is a bijection from
 $S$ to $\bar{S}$ and $u(x)+ v(x)$ is injective on each $\psi^{-1}(s)$
 for all $s \in S$.

On the other hand, by assumption we have $ \bar{\psi}(v(x))=0$ for every $x\in A$.
Hence,
$$
\bar{\psi}(u(x) + v(x))=\bar{\psi}(u(x)) + \bar{\psi}(v(x))=\bar{\psi}(u(x))
$$
for all $x \in A$.  Therefore, the following diagram commutes:
$$
\xymatrix{
  A \ar[d]_{\psi} \ar[r]^{u}
                & A \ar[d]^{\bar{\psi}}  \\
  S \ar[r]_{h}
                & \bar{S}      }
$$
Applying  Lemma \ref{lem-1.1} to this commutative diagram, we know that
$u(x)$ permutes $A$ if and only if $h$ is a bijection from $S$ to $\bar{S}$
and $u(x)$ is injective on each $\psi^{-1}(s)$  for all $s \in S$.

For each $s \in S$, $v(x)$ is a constant function on $\psi^{-1}(s)$ by assumption.
It then follows that  $u(x)+ v(x)$ is injective on each $\psi^{-1}(s)$
 for all $s \in S$ if and only if $u(x)$ is injective on each $\psi^{-1}(s)$
 for all $s \in S$.

Summarizing the discussions above proves the desired conclusion.
\end{proof}

As an application of Theorem \ref{thm-main3}, we have the following corollary.

\begin{corollary}\label{pro-main1}
Let $g(x)$ be a polynomial over $\gf_{q^n}$ such that $g(x)^q=g(x)$ for every $x\in\mathbb{F}_{q^n}$, and let $L(x)\in\mathbb{F}_q[x]$ be a linearized polynomial. Then for every $\delta\in\mathbb{F}_{q^n}$, the polynomial
$$f(x)=g(x^q-x+\delta)+L(x)$$
permutes $\mathbb{F}_{q^n}$ if and only if $L(x)$ permutes
$\mathbb{F}_{q^n}$.
\end{corollary}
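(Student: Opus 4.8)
The plan is to realize the corollary as a direct instance of Theorem~\ref{thm-main3}, taking $A = \mathbb{F}_{q^n}$ and identifying the three maps $\psi$, $\bar\psi$, $u$, $v$ so that the theorem's hypotheses are all met. The natural choice is the Frobenius-difference map $\psi(x) = x^q - x$, together with $\bar\psi = \psi$, so that $S = \bar{S} = \psi(\mathbb{F}_{q^n})$ is the image of $x \mapsto x^q - x$; then I would set $u(x) = L(x)$ and $v(x) = g(x^q - x + \delta)$, so that $f(x) = u(x) + v(x)$ is exactly the polynomial in the statement. The conclusion I want — that $f$ permutes $\mathbb{F}_{q^n}$ iff $L$ does — is then precisely the conclusion of Theorem~\ref{thm-main3} with this $u$, once I verify the three structural hypotheses the theorem demands.

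I would verify the hypotheses in the following order. First, $\bar\psi(x) = x^q - x$ is additive (indeed $\mathbb{F}_q$-linear), since Frobenius is additive; and $\psi = \bar\psi$ is surjective onto its image $S$ with $\sharp(S) = \sharp(\bar S)$ trivially. Second, I must check that $\bar\psi(v(x)) = 0$ for every $x$, i.e. that $g(x^q - x + \delta)^q = g(x^q - x + \delta)$. This is immediate from the standing assumption $g(y)^q = g(y)$ for all $y \in \mathbb{F}_{q^n}$: the value $g(x^q - x + \delta)$ always lies in $\mathbb{F}_q$, hence is fixed by Frobenius, so $\bar\psi$ annihilates it. Third, I need $v(x)$ to be constant on each fiber $\psi^{-1}(s)$; but two elements $x, x'$ lie in the same fiber exactly when $x^q - x = x'^q - x'$, and then $x^q - x + \delta = x'^q - x' + \delta$, so $g$ takes the same value at both arguments. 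Thus $v$ is constant on fibers. Finally, since both $u = L$ and $v$ are compatible with the common diagram — $L(x)^q - L(x)$ is determined by $x^q - x$ because $L$ has coefficients in $\mathbb{F}_q$ and commutes with Frobenius, giving a well-defined $h$ on $S$ — the diagram commutes, and Theorem~\ref{thm-main3} applies verbatim to yield the equivalence.

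The only genuinely substantive point, and the step I would treat most carefully, is confirming that the diagram commutes with a well-defined $h\colon S \to \bar S$; that is, that $\bar\psi(u(x)+v(x))$ depends on $x$ only through $\psi(x)$. Since $v$ contributes nothing to $\bar\psi$ (it is annihilated), this reduces to showing $\bar\psi(L(x)) = L(x)^q - L(x)$ is a function of $x^q - x$. Using that $L(x)^q = L(x^q)$ (because $L \in \mathbb{F}_q[x]$ is linearized) and that $L$ is additive, one gets $L(x)^q - L(x) = L(x^q - x) = L(\psi(x))$, so $h = L|_S$ works and the diagram commutes. With the diagram in hand and the two annihilation/constancy conditions established, the corollary follows directly from Theorem~\ref{thm-main3}, with no residual computation required.
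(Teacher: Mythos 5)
Your proposal is correct and takes essentially the same route as the paper: both are direct applications of Theorem~\ref{thm-main3} with $u(x)=L(x)$ and $v(x)=g(x^q-x+\delta)$, verifying additivity of $\bar\psi$, the annihilation $\bar\psi(v(x))=0$ (since $g$ takes values in $\mathbb{F}_q$), constancy of $v$ on fibers, and commutativity via $L(x)^q-L(x)=L(x^q-x)$. The only (inessential) difference is bookkeeping: the paper folds $\delta$ into the fibering map, taking $\psi(x)=x^q-x-\delta$ and $h(x)=L(x)-L(\delta)$, while you use the unshifted $\psi(x)=x^q-x$ with $h=L|_S$, which is if anything slightly cleaner.
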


\begin{proof}  We now consider Theorem \ref{thm-main3} and let $A=\gf_q$. We first define
\begin{eqnarray*}
S=\{x^q-x-\delta: x \in A\} \mbox{ and } \bar{S}=\{x^q-x: x \in A\}.
\end{eqnarray*}
It is easily seen that $\#(S)=\#(\bar{S})=q^{n-1}$.

We then define
\begin{eqnarray*}
\psi(x)=x^q-x-\delta, \mbox{ } \bar{\psi}(x)=x^q-x, \mbox{ and } h(x)=L(x)-L(\delta).
\end{eqnarray*}
By definition $\psi$ is a surjection from $A$ to $S$, $\bar{\psi}$ is a surjection from $A$ to $\bar{S}$
and is additive,
and $h$ is a function from $S$ to $\bar{S}$.

Define $u(x)=L(x)$ and $v(x)=g(x^q-x+\delta)$ for all $x \in A$.  Then $u(x)$ and $v(x)$ are
functions from $A$ to $A$. It is straightforward to verify that $\bar{\psi}(u(x)+v(x))=h(\psi(x))$
for all $x \in A$. Hence the diagram in Theorem \ref{thm-main3} commutes.

By definition and assumption we have that $ \bar{\psi}(v(x))=0$ for every $x\in A$ and
$v(x)$ is a constant on each $\psi^{-1}(s)$.

Hence all the conditions in Theorem \ref{thm-main3} are satisfied. Then the desired conclusion
of this corollary follows from Theorem \ref{thm-main3}.
\end{proof}

In order to apply Corollary \ref{pro-main1} for the construction of explicit permutation polynomials
on $\gf_{q^n}$, we need to find polynomials $g(x) \in \gf_{q^n}[x]$ such that $g^q=g$. We now search
for such polynomials $g(x)$.

Let $d$ be a divisor of $n$ and let $n=kd$. For any $d$ with $1<d<n$, define
$$M=M(n,d)=1+q^d+\cdots+q^{(k-1)d}.$$
Let $h(x)$ be any polynomial over $\gf_{q^n}$.
The following are examples of polynomials $g(x)$ such that $g(x)^q=g(x)$ for every $x\in\mathbb{F}_{q^n}$.

1. For $d=1$, let $g(x)=\Tr(h(x))$.

2. For $d=n$, let $g(x)=h(x)^{s(q^n-1)/(q-1)}$.

3. For $1<d<n$, let $g(x)=h(x)^{M}+h(x)^{Mq}\cdots+h(x)^{Mq^{d-1}}$.

4. If $g_1(x)$ and $g_2(x)$ are polynomials with $g_1(x)^q=g_1(x)$ and $g_2(x)^q=g_2(x)$, then we have
$$(g_1(x)g_2(x))^q=g_1(x)g_2(x) \mbox{ and } (g_1(x)+g_2(x))^q=g_1(x)+g_2(x).$$

5. In general, if $g_i(x),\ i=1, 2, \ldots, r,$ are polynomials over $\mathbb{F}_{q^n}[x]$ with
$g_i(x)^q=g_i(x),\  i=1, 2, \ldots, r$ and $g(x_1, \ldots, x_r)\in \mathbb{F}_{q}[x_1, \ldots, x_r]$,
then
$$ g(g_1(x), \ldots, g_r(x))^q=g(g_1(x), \ldots, g_r(x)).$$

Hence, there are many polynomials $g(x) \in \gf_{q^n}[x]$ such that $g^q=g$. In addition, there are a large
number of
linearized permutation polynomials $L(x) \in \gf_{q^n}$. Hence, Corollary \ref{pro-main1} leads to a lot of
new permutation polynomials over $\gf_{q^n}$ of the form $g(x^q-x+\delta)+L(x)$.

\vspace*{.2cm}
As an application of Theorem \ref{thm-main3}, we have the  following, which is different from  Theorem 4
in \cite{ZH}.

\begin{theorem}\label{thm-yding}
Let $t$ be  an even integer  and $n=2k$. Let $\delta\in\mathbb{F}_{q^n}$ with $\delta^{q^k}=-\delta$,
and let $L(x) \in \mathbb{F}_{q^k}[x]$. Then $f(x)=(x^{q^k}-x+\delta)^t
+L(x)$ is a PP over $\mathbb{F}_{q^n}$ if and only if $L(x)$  a PP over $\mathbb{F}_{q^n}$.
\end{theorem}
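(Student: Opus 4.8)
The plan is to deduce the statement from Theorem \ref{thm-main3}, exactly as Corollary \ref{pro-main1} is, with $(x^{q^k}-x+\delta)^t$ playing the role of the $g$-term and $L$ playing the role of the linearized summand. Taking $A=\mathbb{F}_{q^n}$, I would set $\psi(x)=x^{q^k}-x+\delta$ and $\bar\psi(x)=x^{q^k}-x$, and let $S=\psi(A)$, $\bar S=\bar\psi(A)$. The $\mathbb{F}_q$-linear map $x\mapsto x^{q^k}-x$ has kernel $\mathbb{F}_{q^k}$, so $S$ and $\bar S$ each have exactly $q^k$ elements, $\bar\psi$ is additive, and both maps are surjective onto their images. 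I then put $u(x)=L(x)$ and $v(x)=(x^{q^k}-x+\delta)^t=\psi(x)^t$, so that $f=u+v$. It remains to verify the three hypotheses of Theorem \ref{thm-main3}: that $v$ is constant on each fibre $\psi^{-1}(s)$, that $\bar\psi(v(x))=0$ for all $x$, and that the square with some bottom map $h\colon S\to\bar S$ commutes.

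The first condition is immediate, since $v(x)=\psi(x)^t$ depends only on $\psi(x)$. The key computation is the vanishing $\bar\psi(v(x))=0$, and this is where both hypotheses on $t$ and $\delta$ are used. Writing $w=x^{q^k}-x+\delta$, one finds $w^{q^k}=x^{q^{2k}}-x^{q^k}+\delta^{q^k}=x-x^{q^k}-\delta=-w$, using $x^{q^n}=x$ for $x\in\mathbb{F}_{q^n}$ and the assumption $\delta^{q^k}=-\delta$. Hence $(w^t)^{q^k}=(-w)^t=w^t$ because $t$ is even, so $v(x)=w^t\in\mathbb{F}_{q^k}=\ker\bar\psi$ and therefore $\bar\psi(v(x))=0$. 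I would present this as the core of the argument, since it shows precisely why the parity of $t$ together with $\delta^{q^k}=-\delta$ is exactly what is needed.

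For the commuting square I would compute $\bar\psi(f(x))=\bar\psi(v(x))+\bar\psi(u(x))=0+\big(L(x)^{q^k}-L(x)\big)$. Because $L$ has coefficients in $\mathbb{F}_{q^k}$ one has $L(x)^{q^k}=L(x^{q^k})$, and because $L$ is additive one then gets $L(x^{q^k})-L(x)=L(x^{q^k}-x)=L(\psi(x)-\delta)$. Thus $\bar\psi\circ f=h\circ\psi$ with $h(s)=L(s-\delta)$. To see that $h$ lands in $\bar S$, note that every $y\in\bar S$ satisfies $y^{q^k}=-y$, whence $L(y)^{q^k}=L(y^{q^k})=L(-y)=-L(y)$, so $L(y)\in\bar S$; applying this to $y=s-\delta$ gives $h(s)\in\bar S$. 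With the diagram commuting and the two side conditions checked, Theorem \ref{thm-main3} immediately yields that $f=u+v$ permutes $\mathbb{F}_{q^n}$ if and only if $u=L$ does, which is the assertion.

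I expect the commuting-square step to be the main obstacle, and it is the only place where additivity of $L$ is genuinely used: the identity $L(x^{q^k})-L(x)=L(x^{q^k}-x)$ holds iff $L$ is additive, so for the equivalence to be correct $L$ must be read as a linearized ($q^k$-)polynomial over $\mathbb{F}_{q^k}$ rather than an arbitrary member of $\mathbb{F}_{q^k}[x]$. (Over $\mathbb{F}_9$ with $k=1$, $t=2$ and $\delta^3=-\delta$ nonzero, the permutation $L(x)=x^5$ already yields a non-injective $f$.) Under that reading the plan above goes through without further difficulty.
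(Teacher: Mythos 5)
Your proof is correct and follows essentially the same route as the paper: both apply Theorem \ref{thm-main3} with $\bar{\psi}(x)=x^{q^k}-x$, $u=L$, and $v(x)=(x^{q^k}-x+\delta)^t$, the only cosmetic difference being that you take $\psi(x)=x^{q^k}-x+\delta$ and $h(s)=L(s-\delta)$ (mirroring Corollary \ref{pro-main1}) whereas the paper takes $\psi=\bar{\psi}$ and $h=L$; your computation $w^{q^k}=-w$, hence $(w^t)^{q^k}=w^t$ for even $t$, is exactly the ``straightforward'' verification the paper leaves implicit. Your closing caveat is also correct and worth stressing: the commuting square needs $L(x)^{q^k}-L(x)=L(x^{q^k}-x)$, which holds only when $L$ is a linearized polynomial with coefficients in $\mathbb{F}_{q^k}$, not for an arbitrary element of $\mathbb{F}_{q^k}[x]$ as the statement literally reads, and your $\mathbb{F}_9$ example (with $q=3$, $k=1$, $t=2$, $\delta^3=-\delta\neq 0$, $L(x)=x^5$, one checks $f(2+i)=f(2i)$) shows the statement is false under the literal reading. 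The paper's own proof silently makes the same linearity assumption---its diagram with $h=L$ commutes only for linearized $L$---and its application in Corollary \ref{thm-ypz16} uses only linearized $L$, so the intended hypothesis agrees with your reading.
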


\begin{proof}  We now consider Theorem \ref{thm-main3} and let $A=\gf_q$. We first define
\begin{eqnarray*}
S= \bar{S}=\{x^{q^k}-x: x \in A\}.
\end{eqnarray*}

We then define
\begin{eqnarray*}
\psi(x)= \bar{\psi}(x)=x^{q^k}-x \mbox{ and } h(x)=L(x).
\end{eqnarray*}
By definition $\psi$ is a surjection from $A$ to $S$, $\bar{\psi}$ is a surjection from $A$ to $\bar{S}$
and is additive,
and $h$ is a function from $S$ to $\bar{S}$.

Define $u(x)=L(x)$ and $v(x)=(x^{q^k}-x+\delta)^t$ for all $x \in A$.  Then $u(x)$ and $v(x)$ are
functions from $A$ to $A$. It is straightforward to verify that $\bar{\psi}(u(x)+v(x))=h(\psi(x))$
for all $x \in A$. Hence the diagram in Theorem \ref{thm-main3} commutes.

By definition and assumption we have that $ \bar{\psi}(v(x))=0$ for every $x\in A$ and
$v(x)$ is a constant on each $\psi^{-1}(s)$.

Hence all the conditions in Theorem \ref{thm-main3} are satisfied. Then the desired conclusion
of this corollary follows from Theorem \ref{thm-main3}.
\end{proof}

\begin{corollary}\label{thm-ypz16}
Let $s, \, t, \, n,\, k$ be nonnegative integers such that  $2|t$ and $n=2k$. Let $\beta \in\mathbb{F}_{q^k}$,
$\gamma\in\mathbb{F}_{q}^*$,
and $\delta\in\mathbb{F}_{q^n}$ such that $\delta^{q^k}=-\delta$. Then $f(x)=(x^{q^k}-x+\delta)^t
+ \beta \Tr(x)  + \gamma x^{q^s}$ is a PP over $\mathbb{F}_{q^n}$  if and only if $\Tr(\beta \gamma^{-1})+1 \ne 0$.
\end{corollary}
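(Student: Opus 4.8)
The plan is to recognize the polynomial $f(x)=(x^{q^k}-x+\delta)^t+\beta\Tr(x)+\gamma x^{q^s}$ as an instance of the polynomial treated in Theorem \ref{thm-yding}, taking the linearized part to be $L(x)=\beta\Tr(x)+\gamma x^{q^s}$. First I would check that $L$ meets the hypothesis $L(x)\in\mathbb{F}_{q^k}[x]$ of that theorem: the trace function $\Tr$ is a $q$-polynomial all of whose coefficients equal $1\in\mathbb{F}_q$, so $\beta\Tr(x)$ has coefficients in $\mathbb{F}_{q^k}$ (since $\beta\in\mathbb{F}_{q^k}$), while $\gamma\in\mathbb{F}_q^*\subseteq\mathbb{F}_{q^k}$; hence $L\in\mathbb{F}_{q^k}[x]$. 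Because $2\mid t$ and $\delta^{q^k}=-\delta$ by assumption, all the conditions of Theorem \ref{thm-yding} hold, and it reduces the problem to deciding exactly when $L(x)=\beta\Tr(x)+\gamma x^{q^s}$ permutes $\mathbb{F}_{q^n}$.

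Next I would exploit that $L$ is an $\mathbb{F}_q$-linear map of the $\mathbb{F}_q$-vector space $\mathbb{F}_{q^n}$: the map $x\mapsto x^{q^s}$ is $\mathbb{F}_q$-linear, multiplication by $\gamma\in\mathbb{F}_q$ is $\mathbb{F}_q$-linear, and $x\mapsto\beta\Tr(x)$ is $\mathbb{F}_q$-linear since $\Tr$ is. Consequently $L$ permutes the finite set $\mathbb{F}_{q^n}$ if and only if it is injective, i.e. if and only if its kernel is trivial. So the task becomes determining precisely when $L(x)=0$ forces $x=0$.

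Finally, to analyze the kernel I would apply $\Tr$ to the equation $\gamma x^{q^s}=-\beta\Tr(x)$. Using $\Tr(x^{q^s})=\Tr(x)$ together with the fact that $\Tr(x)\in\mathbb{F}_q$ and $\gamma\in\mathbb{F}_q$ are scalars (so $\Tr(\gamma^{-1}\beta\,\Tr(x))=\gamma^{-1}\Tr(x)\Tr(\beta)$), this yields
$$\Tr(x)\bigl(1+\Tr(\beta\gamma^{-1})\bigr)=0.$$
If $1+\Tr(\beta\gamma^{-1})\ne0$, then $\Tr(x)=0$, and substituting back gives $\gamma x^{q^s}=0$, forcing $x=0$; hence $L$ is a permutation. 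Conversely, if $1+\Tr(\beta\gamma^{-1})=0$ (which forces $\beta\ne0$, since $\beta=0$ would give $1+\Tr(\beta\gamma^{-1})=1\ne0$), then for any $c\in\mathbb{F}_q^*$ I would set $x=(-\gamma^{-1}\beta c)^{q^{-s}}$ and verify, via $\Tr(x)=\Tr(-\gamma^{-1}\beta c)=-\gamma^{-1}c\,\Tr(\beta)=c$, that this nonzero $x$ genuinely lies in the kernel, so $L$ is not a permutation. The main obstacle is this converse direction: one must confirm that the necessary condition extracted by taking traces is in fact sufficient to exhibit a nonzero kernel element, which is exactly where the self-consistency identity $-\gamma^{-1}c\,\Tr(\beta)=c$ — valid precisely when $1+\Tr(\beta\gamma^{-1})=0$ — does the work.
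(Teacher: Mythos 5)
Your proof is correct, and its first half coincides with the paper's: both invoke Theorem \ref{thm-yding} with $L(x)=\beta\Tr(x)+\gamma x^{q^s}$ (after checking $L\in\mathbb{F}_{q^k}[x]$) to reduce the problem to deciding when $L$ permutes $\mathbb{F}_{q^n}$. Where you diverge is the second half. The paper applies the AGW lemma (Lemma \ref{lem-1.1}) a second time: it normalizes to $x^{q^s}+\beta\gamma^{-1}\Tr(x)$, fits it into a commutative diagram with $\Tr$ as both vertical maps and $x\mapsto(1+\Tr(\beta\gamma^{-1}))x$ on the bottom, and reads off the criterion from bijectivity of that bottom map (injectivity on trace-fibers being automatic). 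You instead use the fact that $L$ is $\mathbb{F}_q$-linear, so that permuting is equivalent to having trivial kernel, and analyze the kernel by applying $\Tr$ to $L(x)=0$; crucially, in the degenerate case $1+\Tr(\beta\gamma^{-1})=0$ you must (and do) exhibit an explicit nonzero kernel element $x=(-\gamma^{-1}\beta c)^{q^{-s}}$, verified via $\Tr(x)=-c\,\Tr(\beta\gamma^{-1})=c$. The underlying computation — trace turns $L$ into multiplication by $1+\Tr(\beta\gamma^{-1})$ on $\mathbb{F}_q$ — is the same in both arguments, but the frameworks differ in what they charge for the ``only if'' direction: the AGW lemma hands the paper non-bijectivity for free once the bottom map degenerates, whereas your kernel approach pays for it with an explicit witness, in exchange for being more elementary and self-contained (no second appeal to Lemma \ref{lem-1.1}).
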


\begin{proof}
Let $L(x)=\beta \Tr(x)  + \gamma x^{q^s}$. Then $L(x) \in \gf_{q^k}[x]$ is a linearized polynomial.
It then follows from Theorem \ref{thm-yding} that $f(x)$ is a PP over $\gf_{q^n}$ if and only if
$L(x)$ is a PP over $\gf_{q^n}$.

Since $L(x)$ is a linearized polynomial, $L(x)$ is a PP over $\gf_{q^n}$ if and only if $\gamma\ne0$ and $ x^{q^s}+\beta \gamma^{-1} \Tr(x)$ is a PP over $\mathbb{F}_{q^n}$.

We have obviously  the following commutative diagram:
 $$
\xymatrix{
  \mathbb{F}_{q^n} \ar[d]_{\Tr(x)} \ar[r]^{x^{q^s}+\beta\gamma^{-1} \Tr(x)}
                & \mathbb{F}_{q^n} \ar[d]^{\Tr(x)}  \\
  S \ar[r]_{(1+\Tr(\beta\gamma^{-1})) x}
                & S      }
$$
Applying  Lemma \ref{lem-1.1} to this commutative diagram, we know that $ x^{q^s}+\beta\gamma^{-1} Tr(x)$
 permutes $\mathbb{F}_{q^n}$ if and only if $\gamma\ne 0$ and  $\Tr(\beta\gamma^{-1})\ne-1$.
 The desired conclusion then follows. This completes the proof.

\end{proof}

We have also the following conclusion.

\begin{theorem}\label{thm-yd18}
Let $t$ and $k$ be integers  and $n=2k$. Let $\delta\in\mathbb{F}_{q^k}$, where $q$ is odd.
Let $\alpha \in \gf_{q^n}$ with
$\alpha^{q^k}=-\alpha$ and  $\beta \in \gf_{q^n}$ with
$\beta^{q^k}=-\beta$. Let  $L(x)\in\mathbb{F}_{q^k}[x]$ . Then $f(x)=\alpha (x^{q^k}+x+\delta)^t +
\beta \Tr(x) +L(x)$ is a PP over $\mathbb{F}_{q^n}$ if and only if $L(x) $ is  a PP over $\mathbb{F}_{q^n}$.
\end{theorem}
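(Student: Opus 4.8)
The plan is to realize $f$ in the form $u+v$ required by Theorem~\ref{thm-main3} and let that theorem do the work, exactly as in the proofs of Corollary~\ref{pro-main1} and Theorem~\ref{thm-yding}. Concretely, I would take $A=\mathbb{F}_{q^n}$ and use the relative trace $x^{q^k}+x=\Tr_{q^n/q^k}(x)$ as \emph{both} vertical maps, setting
$$
\psi(x)=\bar{\psi}(x)=x^{q^k}+x,\qquad S=\bar{S}=\mathbb{F}_{q^k},
$$
which is legitimate because the relative trace is additive and surjective onto $\mathbb{F}_{q^k}$, so that $\sharp(S)=\sharp(\bar{S})=q^k$ and each fibre $\psi^{-1}(s)$ has $q^k$ elements. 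I would then split $f$ as $u(x)=L(x)$ and $v(x)=\alpha(x^{q^k}+x+\delta)^t+\beta\,\Tr(x)$, and guess $h(x)=L(x)$ for the bottom arrow.

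The substantive part is to check the two hypotheses of Theorem~\ref{thm-main3} imposed on $v$. For constancy on fibres, on $\psi^{-1}(s)$ one has $x^{q^k}+x=s$, so $\alpha(x^{q^k}+x+\delta)^t=\alpha(s+\delta)^t$ is constant; moreover, by transitivity of the trace $\Tr(x)=\Tr_{q^k/q}(x^{q^k}+x)=\Tr_{q^k/q}(s)$, so $\beta\,\Tr(x)$ is constant on $\psi^{-1}(s)$ as well, whence $v$ is constant on every fibre. For the vanishing $\bar{\psi}(v(x))=v(x)^{q^k}+v(x)=0$, the key observation is that $w:=x^{q^k}+x+\delta$ lies in $\mathbb{F}_{q^k}$ (this is exactly where $\delta\in\mathbb{F}_{q^k}$ is used), so $w^t\in\mathbb{F}_{q^k}$ and $(w^t)^{q^k}=w^t$, while $\Tr(x)\in\mathbb{F}_q$ gives $(\Tr(x))^{q^k}=\Tr(x)$; combining these with the anti-fixed conditions $\alpha^{q^k}=-\alpha$ and $\beta^{q^k}=-\beta$ yields $v(x)^{q^k}=-\alpha w^t-\beta\,\Tr(x)=-v(x)$, so that $\bar{\psi}(v(x))=0$. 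The hypothesis that $q$ is odd is what makes these anti-fixed conditions genuine constraints rather than forcing $\alpha,\beta\in\mathbb{F}_{q^k}$, and is the natural home of the sign in $v^{q^k}=-v$.

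It remains to verify that the diagram commutes. Using $\bar{\psi}(v(x))=0$ and additivity of $\bar{\psi}$,
$$
\bar{\psi}(f(x))=\bar{\psi}(L(x))+\bar{\psi}(v(x))=L(x)^{q^k}+L(x).
$$
Since $L$ is a linearized polynomial over $\mathbb{F}_{q^k}$, one has $L(x)^{q^k}=L(x^{q^k})$, hence $L(x)^{q^k}+L(x)=L(x^{q^k}+x)=L(\psi(x))=h(\psi(x))$; the same computation shows that $h=L$ maps $\mathbb{F}_{q^k}$ into $\mathbb{F}_{q^k}=\bar{S}$. All hypotheses of Theorem~\ref{thm-main3} are then in force, and it gives at once that $f=u+v$ permutes $\mathbb{F}_{q^n}$ if and only if $u=L$ does, which is the assertion. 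I expect the only delicate step to be the vanishing $\bar{\psi}(v)=0$: one must keep track of the field of definition of each summand (that $w^t\in\mathbb{F}_{q^k}$ and $\Tr(x)\in\mathbb{F}_q$) and combine this correctly with the two anti-fixed hypotheses, whereas everything else is the same bookkeeping already carried out for Theorem~\ref{thm-yding}.
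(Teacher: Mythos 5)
Your proof is correct and follows essentially the same route as the paper: the same application of Theorem~\ref{thm-main3} with $A=\mathbb{F}_{q^n}$, $\psi=\bar{\psi}=x^{q^k}+x$, $u=L$, $v(x)=\alpha(x^{q^k}+x+\delta)^t+\beta\Tr(x)$, and $h=L$. The only cosmetic difference is that you establish constancy of $\beta\Tr(x)$ on the fibres via trace transitivity $\Tr(x)=\Tr_{q^k/q}(x^{q^k}+x)$, whereas the paper writes $\Tr(x)=2^{-1}\Tr(x+x^{q^k})$ using that $q$ is odd --- these are equivalent identities, and your version of the proof actually spells out the verifications (the vanishing $\bar{\psi}(v)=0$ and the commutativity of the diagram, which implicitly requires $L$ to be linearized) that the paper leaves to the reader.
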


\begin{proof}  We now consider Theorem \ref{thm-main3} and let $A=\gf_q$. We first define
\begin{eqnarray*}
S= \bar{S}=\{x^{q^k}+x: x \in A\}.
\end{eqnarray*}

We then define
\begin{eqnarray*}
\psi(x)= \bar{\psi}(x)=x^{q^k}+x \mbox{ and } h(x)=L(x).
\end{eqnarray*}
By definition $\psi$ is a surjection from $A$ to $S$, $\bar{\psi}$ is a surjection from $A$ to $\bar{S}$
and is additive,
and $h$ is a function from $S$ to $\bar{S}$.

Define $u(x)=L(x)$ and $v(x)=\alpha (x^{q^k}+x+\delta)^t + \beta \Tr(x)$ for all $x \in A$.  Then $u(x)$ and $v(x)$ are
functions from $A$ to $A$. It is straightforward to verify that $\bar{\psi}(u(x)+v(x))=h(\psi(x))$
for all $x \in A$. Hence the diagram in Theorem \ref{thm-main3} commutes.

By definition and assumption we have that $ \bar{\psi}(v(x))=0$ for every $x\in A$. Note that $q$ is odd.
We have $v(x)=\alpha (x^{q^k}+x+\delta)^t +2^{-1}  \beta \Tr(x+x^{q^k})$. Hence,
$v(x)$ is a constant on each $\psi^{-1}(s)$.

Hence all the conditions in Theorem \ref{thm-main3} are satisfied. Then the desired conclusion
of this corollary follows from Theorem \ref{thm-main3}.
\end{proof}

The following follows directly from Theorem \ref{thm-yd18}.

\begin{corollary}\label{cor-yd18}
Let $s$, $t$ and $k$ be integers  and $n=2k$. Let $\delta\in\mathbb{F}_{q^k}$ and $\gamma \in\mathbb{F}_{q^k}$, where $q$ is odd.
Let $\alpha \in \gf_{q^n}$ with
$\alpha^{q^k}=-\alpha$ and  $\beta \in \gf_{q^n}$ with
$\beta^{q^k}=-\beta$.  Then $f(x)=\alpha (x^{q^k}+x+\delta)^t +
\beta \Tr(x) + \gamma x^{q^s}$ is a PP over $\mathbb{F}_{q^n}$ if and only if $\gamma \neq 0$.
\end{corollary}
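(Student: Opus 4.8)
The plan is to invoke Theorem \ref{thm-yd18} directly, since the polynomial $f(x)$ appearing here is precisely the one treated there with the linear part specialized to $L(x)=\gamma x^{q^s}$. First I would verify that this choice of $L$ is admissible: because $\gamma\in\gf_{q^k}$, the polynomial $L(x)=\gamma x^{q^s}$ lies in $\gf_{q^k}[x]$ and is a (single-term) linearized polynomial, so the standing hypotheses of Theorem \ref{thm-yd18} on $L$ are satisfied. Theorem \ref{thm-yd18} then asserts that $f(x)=\alpha(x^{q^k}+x+\delta)^t+\beta\Tr(x)+\gamma x^{q^s}$ permutes $\gf_{q^n}$ if and only if $L(x)=\gamma x^{q^s}$ permutes $\gf_{q^n}$. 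This reduces the entire claim to deciding exactly when the monomial $\gamma x^{q^s}$ is a permutation of $\gf_{q^n}$.

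For that last step I would argue as follows. The Frobenius map $x\mapsto x^{q^s}$ is an automorphism of $\gf_{q^n}$, hence a bijection of $\gf_{q^n}$ for every $s$. Consequently $x\mapsto\gamma x^{q^s}$ is the composition of this bijection with the scalar map $y\mapsto\gamma y$; the latter is a bijection of $\gf_{q^n}$ precisely when $\gamma\neq 0$, and is the non-injective zero map when $\gamma=0$. Therefore $\gamma x^{q^s}$ permutes $\gf_{q^n}$ if and only if $\gamma\neq 0$.

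Chaining the two steps yields $f$ a PP $\iff L$ a PP $\iff \gamma\neq 0$, which is exactly the assertion. There is essentially no analytic obstacle here: all of the substantive content is carried by Theorem \ref{thm-yd18}, and the residual work is the elementary observation about the Frobenius monomial. The only point that warrants genuine care is confirming that the specialization $L(x)=\gamma x^{q^s}$ truly meets the hypotheses of Theorem \ref{thm-yd18}, namely membership in $\gf_{q^k}[x]$ together with linearity, and this is precisely where the assumption $\gamma\in\gf_{q^k}$ is needed.
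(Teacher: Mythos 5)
Your proposal is correct and matches the paper's route exactly: the paper states that this corollary ``follows directly from Theorem \ref{thm-yd18},'' which is precisely your reduction, with the remaining details (that $\gamma x^{q^s}\in\gf_{q^k}[x]$ is linearized, and that it permutes $\gf_{q^n}$ if and only if $\gamma\neq 0$ since $x\mapsto x^{q^s}$ is a bijection) filled in correctly.
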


\begin{corollary}\label{pro-main2}
Let $q$ be odd.
Let $g(x)$ be a polynomial over $\gf_{q^n}$ such that $g(x)^q=-g(x)$ for every $x\in\mathbb{F}_{q^n}$, and
let $L(x)\in\mathbb{F}_q[x]$ be a linearized polynomial. Let $\beta \in \gf_{q^n}$ with $\beta^q=-\beta$. Then
for every $\delta\in\mathbb{F}_{q^n}$, the polynomial
$$f(x)=g(x^q+x+\delta)+\beta \Tr(x) + L(x)$$ permutes $\mathbb{F}_{q^n}$ if and only if $L(x)$ permutes $\gf_{q^n}$.
\end{corollary}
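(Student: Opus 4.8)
The plan is to deduce this corollary from Theorem~\ref{thm-main3}, following the same template used for Corollary~\ref{pro-main1} and Theorem~\ref{thm-yd18}. First I would set $A=\gf_{q^n}$ and take $\psi(x)=\bar{\psi}(x)=x^q+x$, so that $S=\bar{S}=\{x^q+x:x\in A\}$. Since $z\mapsto z^q+z$ is $\gf_q$-linear, $\psi$ and $\bar{\psi}$ are surjective onto $S=\bar{S}$, the map $\bar{\psi}$ is additive, and $\sharp(S)=\sharp(\bar{S})$ holds trivially. I would then put $u(x)=L(x)$ and $v(x)=g(x^q+x+\delta)+\beta\Tr(x)$, so that $u+v=f$, and define $h(s)=L(s)$ for $s\in S$. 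The remaining work is to verify the three structural hypotheses of Theorem~\ref{thm-main3}: commutativity of the diagram, the vanishing $\bar{\psi}(v(x))=0$, and constancy of $v$ on each fibre $\psi^{-1}(s)$.

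For the vanishing, I would compute $v(x)^q+v(x)$ directly, using $g(\cdot)^q=-g(\cdot)$, $\beta^q=-\beta$, and $\Tr(x)\in\gf_q$ (hence $\Tr(x)^q=\Tr(x)$); each term of $v(x)^q$ is then the negative of the corresponding term of $v(x)$, giving $\bar{\psi}(v(x))=0$. For commutativity, because $\bar{\psi}$ is additive and $\bar{\psi}(v(x))=0$, I get $\bar{\psi}(f(x))=\bar{\psi}(L(x))=L(x)^q+L(x)$. Since $L\in\gf_q[x]$ is linearized I would rewrite $L(x)^q=L(x^q)$ and invoke additivity of $L$ to obtain $L(x)^q+L(x)=L(x^q+x)=L(\psi(x))=h(\psi(x))$, which simultaneously confirms that $h$ maps $S$ into $\bar{S}$.

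The step I expect to be the main (though mild) obstacle is showing that $v$ is constant on the fibres of $\psi$, because the trace term $\beta\Tr(x)$ a priori need not depend only on $x^q+x$. The key trick, exactly as in the proof of Theorem~\ref{thm-yd18}, is that the trace is Frobenius-invariant, $\Tr(x)=\Tr(x^q)$, so with $q$ odd one has $2\Tr(x)=\Tr(x+x^q)=\Tr(\psi(x))$ and hence $\beta\Tr(x)=2^{-1}\beta\Tr(\psi(x))$. This exhibits $v(x)=g(\psi(x)+\delta)+2^{-1}\beta\Tr(\psi(x))$ as a function of $\psi(x)$ alone, so it is automatically constant on each $\psi^{-1}(s)$, with no case analysis on the parity of $n$ needed. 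With all hypotheses of Theorem~\ref{thm-main3} thereby verified, its conclusion yields that $f=u+v$ permutes $A$ if and only if $u=L$ permutes $\gf_{q^n}$, which is precisely the asserted equivalence.
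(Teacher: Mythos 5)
Your proposal is correct and follows essentially the same route as the paper: both invoke Theorem~\ref{thm-main3} with $u=L$, $v(x)=g(x^q+x+\delta)+\beta\Tr(x)$, and the same trick $\beta\Tr(x)=2^{-1}\beta\Tr(x+x^q)$ to get constancy of $v$ on the fibres. The only (cosmetic) difference is that you keep $\delta$ inside the argument of $g$ and take $\psi=\bar{\psi}=x^q+x$ with $h=L$, whereas the paper sets $\psi(x)=x^q+x+\delta$ and $h(x)=L(x)-L(\delta)$; the fibre partition and all verifications are identical.
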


\begin{proof}  We now consider Theorem \ref{thm-main3} and let $A=\gf_q$. We first define
\begin{eqnarray*}
S=\{x^q+x+\delta: x \in A\} \mbox{ and } \bar{S}=\{x^q+x: x \in A\}.
\end{eqnarray*}
It is easily seen that $\#(S)=\#(\bar{S})$.

We then define
\begin{eqnarray*}
\psi(x)=x^q+x+\delta, \mbox{ } \bar{\psi}(x)=x^q+x, \mbox{ and } h(x)=L(x)-L(\delta).
\end{eqnarray*}
By definition $\psi$ is a surjection from $A$ to $S$, $\bar{\psi}$ is a surjection from $A$ to $\bar{S}$
and is additive,
and $h$ is a function from $S$ to $\bar{S}$.

Define $u(x)=L(x)$ and $v(x)=g(x^q+x+\delta) + \beta \Tr(x)$ for all $x \in A$.  Then $u(x)$ and $v(x)$ are
functions from $A$ to $A$. It is straightforward to verify that $\bar{\psi}(u(x)+v(x))=h(\psi(x))$
for all $x \in A$. Hence the diagram in Theorem \ref{thm-main3} commutes.

By definition and assumption we have that $ \bar{\psi}(v(x))=0$ for every $x\in A$ and
$$v(x)=g(x^{q}+x+\delta)+2^{-1}  \beta \Tr(x+x^{q})$$ is a constant on each $\psi^{-1}(s)$.

Hence all the conditions in Theorem \ref{thm-main3} are satisfied. Then the desired conclusion
of this corollary follows from Theorem \ref{thm-main3}.
\end{proof}

To apply Corollary \ref{pro-main2}, we have to find $\beta \in \gf_{q^n}$ such that $\beta^q=-\beta$ and
$g(x) \in \gf_{q^n}[x]$ with $g^q=-g$. Note that $q$ is odd. It can be proven that  $x^q=-x$ has only one
solution $x=0$ when $n$ is odd, and $q$ solutions when $n$ is even.

We now turn to the search for $g(x) \in \gf_{q^n}[x]$ with $g^q=-g$. Below are examples of such
polynomials $g(x)$.

1. Let $n=2k$ and $q$ be an odd prime power. For any $h(x) \in \gf_{q^n}[x]$, define
$$g(x)=h(x)^{q^{2k-1}}+h(x)^{q^{2k-3}}+\cdots+h(x)^{q}-h(x)^{q^{2k-2}}-h(x)^{q^{2k-4}}-\cdots-h(x).$$

2. Let $n$ be even and let $0\ne a$ be an element of $\mathbb{F}_{q^n}$ such that $a^q+a=0$. For any
$h(x) \in \gf_{q^n}[x]$ with $h(x)^q=h(x)$, the polynomial $g(x)=ah(x)$ satisfies that $g(x)^q=-g(x)$.

3. Let $k \ge 1$ and $d \ge 1$ be integers. Define $n=2kd$. Let
$$M=M(n,d)=1+q^{2d}+\cdots+q^{2(k-1)d}.$$
Then for any  polynomial $h(x)\in\mathbb{F}_{q^n}[x]$, the polynomial
$$g(x)=h(x)^{M}+h(x)^{Mq^2}\cdots+h(x)^{Mq^{2(d-1)}}-h(x)^{Mq}-h(x)^{Mq^3}\cdots-h(x)^{Mq^{2d-1}}$$
satisfies that
$g^q=-g$. For example, we have the following polynomials such that $g(x)^q=-g(x)$ for every $x\in\mathbb{F}_{q^n}$.

(3.i). For $d=1$, define
$$g(x)=h(x)^{1+q^2+\cdots+q^{2kd-2}}-h(x)^{q+q^3+\cdots+q^{2kd-1}},$$
where $h(x)\in\mathbb{F}_{q^n}[x]$.

(3.ii). For $d=n/2=k$, define
$$g(x)=h(x)^{q^{2k-1}}+h(x)^{q^{2k-3}}+\cdots+h(x)^{q}-h(x)^{q^{2k-2}}-h(x)^{q^{2k-4}}-\cdots-h(x).$$

(3.iii). For $1<d<k$, define
$$g(x)=h(x)^{M}+h(x)^{Mq^2}\cdots+h(x)^{Mq^{2(d-1)}}-h(x)^{Mq}-h(x)^{Mq^3}\cdots-h(x)^{Mq^{2d-1}}.$$

(3.iv). If $g(x)$ and $h(x)$ are polynomials with $g(x)^q=g(x)$ and $h(x)^q=-h(x)$, then we have
$$(g(x)h(x))^q=-g(x)h(x).$$

(3.v). If $g(x)$ and $h(x)$ are polynomials with $g(x)^q=-g(x)$ and $h(x)^q=-h(x)$, then we have
$$(g(x)+h(x))^q=-(g(x)+h(x)).$$

The discussions above show that Corollary \ref{pro-main2} yields many explicit permutation polynomials
of the form $g(x^q+x+\delta)+\beta \Tr(x) + L(x)$, where $L(x)$ is a linearized PP.

\begin{theorem}\label{thm-dy182}
Let $n=4k$  and $\delta$ be an element of $\mathbb{F}_{q^n}$.
 Let
 $g(x)=  \sum_{i=1}^k x^{q^{2(i-1)} + q^{2(i-1)+2k}}.$
 Then the polynomial
 $f(x)=g(x^q-x+\delta)+ax,$
 where $0\ne a\in\mathbb{F}_{q}$, permutes $\mathbb{F}_{q^n}$ if and only if  $\Tr(\delta)\ne a$
\end{theorem}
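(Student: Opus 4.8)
\emph{Plan.} The plan is to strip $f$ apart in two layers, each handled by Lemma~\ref{lem-1.1} (the first layer being exactly the situation of Lemma~\ref{thm-main31}), until the permutation property collapses onto a one-dimensional comparison of $\Tr(\delta)$ with $a$. First I would normalize $g$: setting $z=x^{1+q^{2k}}$ and using $x^{q^{4k}}=x$ on $\mathbb{F}_{q^n}$ gives $z^{q^{2k}}=z$, so $z\in\mathbb{F}_{q^{2k}}$ and
\[
g(x)=\sum_{i=0}^{k-1}z^{q^{2i}}=\Tr_{\mathbb{F}_{q^{2k}}/\mathbb{F}_{q^2}}\!\big(x^{1+q^{2k}}\big)\in\mathbb{F}_{q^2}.
\]
Thus $g$ is the trace of the norm $N(y)=y^{1+q^{2k}}$, a quadratic map into $\mathbb{F}_{q^2}$ whose polar form $B(u,w)=g(u+w)-g(u)-g(w)=\Tr_{\mathbb{F}_{q^{2k}}/\mathbb{F}_{q^2}}\!\big(uw^{q^{2k}}+u^{q^{2k}}w\big)$ I will need later. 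This observation (that $g^{q^2}=g$, not $g^{q}=g$) is exactly why Corollary~\ref{pro-main1} does not apply directly and why a nontrivial condition survives.

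\emph{Layer 1 (Artin--Schreier).} Take $A=\mathbb{F}_{q^n}$ and $\psi=\bar\psi\colon x\mapsto x^q-x$, with common additive image $S_1=\{s:\Tr(s)=0\}$. Since $a\in\mathbb{F}_q$,
\[
f(x)^q-f(x)=\big(g(s+\delta)^q-g(s+\delta)\big)+as,\qquad s=x^q-x,
\]
so the diagram of Lemma~\ref{lem-1.1} commutes with $h(s)=g(s+\delta)^q-g(s+\delta)+as$. On each fibre $\psi^{-1}(s)=x_0+\mathbb{F}_q$ the quantity $g(x^q-x+\delta)$ is constant, hence $f$ restricts to $x\mapsto\mathrm{const}+ax$, which is injective because $a\neq0$. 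Lemma~\ref{lem-1.1} then reduces the statement to: $f$ permutes $\mathbb{F}_{q^n}$ if and only if $h$ is a bijection of $S_1$.

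\emph{Layer 2 (quotient by the imaginary line).} Put $V=\{y\in\mathbb{F}_{q^2}:y^q=-y\}=\ker\Tr_{\mathbb{F}_{q^2}/\mathbb{F}_q}$, an $\mathbb{F}_q$-line; since $\Tr(y)=2k\,\Tr_{\mathbb{F}_{q^2}/\mathbb{F}_q}(y)$ for $y\in\mathbb{F}_{q^2}$, one has $V\subseteq S_1$. Because $g(s+\delta)\in\mathbb{F}_{q^2}$, the term $G(s):=g(s+\delta)^q-g(s+\delta)$ satisfies $G(s)^q=-G(s)$, i.e.\ $G(s)\in V$, so $h(s)\equiv as\pmod V$. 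I would apply Lemma~\ref{lem-1.1} again to $h\colon S_1\to S_1$ with the projection $\pi\colon S_1\to S_1/V$: the induced map is multiplication by $a$, a bijection, so the question becomes whether $h$ is injective on each coset $s_0+V$. For $w\in V$, the polar form together with $w^q=-w$, $g(w)=kw^2$, and $B(s_0+\delta,w)=w\,\Tr_{\mathbb{F}_{q^n}/\mathbb{F}_{q^2}}(s_0+\delta)$ yields
\[
h(s_0+w)-h(s_0)=aw+\big(G(s_0+w)-G(s_0)\big)=aw-w\,\Tr(\delta)=w\,(a-\Tr(\delta)),
\]
where the $k$- and characteristic-dependent contributions cancel and $\Tr(s_0)=0$ leaves only $\Tr(\delta)$. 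This is nonzero for every $w\in V\setminus\{0\}$ precisely when $\Tr(\delta)\neq a$, so $h$ is injective on the cosets of $V$, hence bijective, if and only if $\Tr(\delta)\neq a$, which is the asserted equivalence.

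\emph{Main obstacle.} I expect the crux to be the final cancellation $G(s_0+w)-G(s_0)=-w\,\Tr(\delta)$: one must expand $g(s_0+\delta+w)-g(s_0+\delta)$ through its polar form, apply the Frobenius difference, and verify that the relation $w^q=-w$ forces the quadratic term $kw^2$ and the $\mathbb{F}_{q^2}$-trace term to drop out, collapsing everything to the absolute trace $\Tr(\delta)$. Securing this collapse in all characteristics—rather than any single invocation of Lemma~\ref{lem-1.1}—is what pins down the precise threshold $\Tr(\delta)=a$.
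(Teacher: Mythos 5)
Your proof is correct, and its first step coincides with the paper's: both apply Lemma~\ref{lem-1.1} to the Artin--Schreier map (you use $\psi(x)=x^q-x$, the paper uses $\psi(x)=x^q-x+\delta$, a cosmetic shift), and both note that $f$ is automatically injective on the fibers $x_0+\mathbb{F}_q$ because $a\ne 0$, so everything reduces to the bijectivity of $h(s)=g(s+\delta)^q-g(s+\delta)+as$ on the trace-zero hyperplane. The second half is where you genuinely diverge. The paper attacks the fiber equation directly: writing a target value as $ab$, raising $g(x)^q-g(x)+ax-a\delta=ab$ to the $q$-th power, using $g^{q^2}=g$, and adding the two equations, it finds that any solution satisfies $(x-c)^q=-(x-c)$ with $c=b+\delta$, then expands $g(x)^q-g(x)$ through the relations $x^{q^s}=(-1)^s(x-c)-(-c)^{q^s}$ to exhibit the left-hand side as $(x-c)\bigl(a-\Tr(\delta)\bigr)$ plus a constant, whence uniqueness holds exactly when $\Tr(\delta)\ne a$. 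You instead make the hidden structure explicit before solving anything: you identify $g$ as $\Tr_{\mathbb{F}_{q^{2k}}/\mathbb{F}_{q^2}}$ composed with the relative norm $y\mapsto y^{1+q^{2k}}$, so $g$ takes values in $\mathbb{F}_{q^2}$ and $G(s)=g(s+\delta)^q-g(s+\delta)$ lies in the line $V=\{y\in\mathbb{F}_{q^2}:y^q=-y\}$; you then apply Lemma~\ref{lem-1.1} a second time to $h$ with the quotient projection $S_1\to S_1/V$, on which $h$ induces multiplication by $a$, and settle injectivity on $V$-cosets via the polar form of $g$, obtaining $h(s_0+w)-h(s_0)=w\bigl(a-\Tr(\delta)\bigr)$ (I checked the polar-form identities $g(w)=kw^2$ and $B(s_0+\delta,w)=w\,\Tr_{\mathbb{F}_{q^n}/\mathbb{F}_{q^2}}(s_0+\delta)$, and the trace-transitivity step; they are right, in every characteristic). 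Both arguments ultimately rest on the same two facts --- $g^{q^2}=g$, and the slope $a-\Tr(\delta)$ along $V$ --- and indeed the paper's manipulation implicitly rediscovers your line $V$ (its solutions $x$ all lie in $c+V$). What your organization buys is modularity and transparency: the second AGW application isolates exactly where non-bijectivity can occur, would apply verbatim to any $g$ with $g^{q^2}=g$ whose polar form along $V$ is computable, and explains conceptually why the threshold is $\Tr(\delta)=a$; the paper's route avoids setting up the quadratic-form machinery and remains a shorter bare-hands computation, at the price of the ad hoc expansion with the $(-c)^{q^s}$ sums.
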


\begin{proof} It follows from Lemma \ref{lem-1.1} and the following commutative diagram
$$
\xymatrix{
  \mathbb{F}_{q^n} \ar[d]_{x^q-x+\delta} \ar[r]^{f}
                & \mathbb{F}_{q^n} \ar[d]^{x^q-x}  \\
  S \ar[r]_{g(x)^q-g(x)+ax-a\delta}
                & \bar{S}      }
$$
where $S=\{b^q-b+\delta: b\in\mathbb{F}_{q^n}\}$ and $\bar{S}=\{b^q-b: b\in\mathbb{F}_{q^n}\} =a\bar{S}$,
that the polynomial
 $$f(x)=g(x^q-x+\delta)+ax,$$
  permutes $\mathbb{F}_{q^n}$ if and only if $g(x)^q-g(x)+ax-a\delta$ is a bijection from $S$ to $\bar{S}$.

Let $ab, b\in \bar{S}$ be any element in $\bar{S}$. We want to show that the equation
  \begin{equation}\label{eq32}
  g(x)^q-g(x)+ax-a\delta=ab
  \end{equation}
  has at most one solution. Note that $g(x)^{q^2}=g(x)$. Raising both sides of (\ref{eq32}) to the power of $q$, 
  we obtain
   \begin{equation}\label{eq322}
  g(x)-g(x)^q+ax^q-a\delta^q=ab^q.
  \end{equation}
  Adding (\ref{eq32}) and  (\ref{eq322}) together gives
  $$x^q+x-\delta-\delta^q=b+b^q.$$

  Let $c=b+\delta$. Then $\Tr(c)=\Tr(\delta)$ and
  \begin{equation}
  (x-c)^q=-(x-c).
  \end{equation}
  It follows that
  \begin{equation}\label{eqn-tty}
   x^{q^s}=(-1)^s(x-c) -(-c)^{q^s}, \ \ s=1,2 \ldots, 4k-1.
  \end{equation}
  With the help of these equations, we obtain
  \begin{eqnarray*}
  \lefteqn{g(x)^q-g(x)+ax-a(b+\delta)} \\
  & = &  g(x)^q-g(x)+a(x-c) \\
  & = & (x-c)(a-\Tr(\delta)) + \sum_{i=1}^k \left( (-c)^{q^{2k+4i-2}} - (-c)^{q^{2k+4i-4}} \right).
  \end{eqnarray*}
  Hence, (\ref{eq32}) has a unique solution if and only if $\Tr(\delta)\ne a$.
  This completes the proof.
\end{proof}

 Theorem \ref{thm-dy182} is a generalization of Theorem 6 in \cite{ZH}. Similarly, we have the
following theorem whose proof is similar to that of Theorem \ref{thm-dy182} and is omitted.

 \begin{theorem}
 Let $n=4k$  and $\delta$ be an element of $\mathbb{F}_{q^n}$.
 Let
 $$g(x)=x^{q+q^{2k+1}}+\cdots+x^{q^{2k-1}+q^{4k-1}}.$$
 Then the polynomial
 $$f(x)=g^q(x^q-x+\delta)+ax,$$
 where $0\ne a\in\mathbb{F}_{q}$,  permutes $\mathbb{F}_{q^n}$ if and only if  $\Tr(\delta) \ne -a$.
\end{theorem}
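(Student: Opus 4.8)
The plan is to imitate the proof of Theorem~\ref{thm-dy182} almost line for line; the only essential change is a parity bookkeeping that flips one sign and so turns the condition $\Tr(\delta)\ne a$ into $\Tr(\delta)\ne -a$. First I would apply Lemma~\ref{lem-1.1} to the commutative square with $A=\mathbb{F}_{q^n}$, $\psi(x)=x^q-x+\delta$, $\bar{\psi}(x)=x^q-x$, and
\[
S=\{b^q-b+\delta:\ b\in\mathbb{F}_{q^n}\},\qquad \bar{S}=\{b^q-b:\ b\in\mathbb{F}_{q^n}\}.
\]
Here $\#S=\#\bar{S}=q^{\,n-1}$, the map $\bar{\psi}$ is additive, and $\bar{S}=a\bar{S}$ because $a\in\mathbb{F}_q^{*}$ while $\bar{S}$ is an $\mathbb{F}_q$-subspace. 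On each fibre $\psi^{-1}(s)$ the quantity $x^q-x$ is constant, so $g(x^q-x+\delta)$ is constant there; hence the restriction of $f$ to $\psi^{-1}(s)$ is of the form $x\mapsto(\mbox{constant})+ax$ and is injective since $a\ne0$. By Lemma~\ref{lem-1.1}, $f$ permutes $\mathbb{F}_{q^n}$ if and only if the induced map $h\colon S\to\bar{S}$ is a bijection, and as $\#S=\#\bar{S}$ this is the same as $h$ being injective.

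Next I would pin down $h$. Computing $\bar{\psi}(f(x))=f(x)^q-f(x)$, using $a\in\mathbb{F}_q$ together with the fact (to be recorded first) that $g(y)^{q^2}=g(y)$ for all $y\in\mathbb{F}_{q^n}$ — because the exponent set of $g$ is invariant under multiplication by $q^2$ modulo $q^{4k}-1$ — one obtains the induced map
\[
h(x)=g(x)^q-g(x)+ax-a\delta,
\]
which is formally the same expression as in Theorem~\ref{thm-dy182}, only with the present $g$. To test injectivity I fix an arbitrary $ab\in\bar{S}$ and count solutions $x$ of $g(x)^q-g(x)+ax-a\delta=ab$. Raising this to the power $q$ and adding it to the original, the $g$-contributions cancel (again by $g^{q^2}=g$), and after dividing by $a$ one is left with $x^q+x-\delta-\delta^q=b+b^q$. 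Setting $c=b+\delta$ this reads $(x-c)^q=-(x-c)$; moreover every $b\in\bar{S}$ satisfies $\Tr(b)=0$, so $\Tr(c)=\Tr(\delta)$.

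The crux, and the step I expect to cost the most work, is to linearise $g(x)^q-g(x)$ under the relation $(x-c)^q=-(x-c)$. From that relation one gets $x^{q^s}=(-1)^s(x-c)-(-c)^{q^s}$ for every $s$; substituting into each monomial $x^{q^s+q^t}$ of $g$ and of $g^q$, I must check two things. First, the $(x-c)^2$ terms coming from the $k$ monomials of $g(x)^q$ and the $k$ monomials of $g(x)$ are each $k(x-c)^2$, so they cancel in the difference. Second — and this is where $n=4k$ and the exact exponent pattern enter — the linear term carries coefficient $\sum_{s=0}^{4k-1}c^{q^s}=\Tr(c)=\Tr(\delta)$: the even exponents $\{2i,\ 2i+2k\}_{i=1}^{k}$ appearing in $g(x)^q$, together with the odd exponents $\{2i-1,\ 2i-1+2k\}_{i=1}^{k}$ appearing in $g(x)$, run exactly once through a complete residue system modulo $4k$, so the two partial sums glue into a full trace. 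It is precisely the odd parity of the exponents of $g$ (against the even parity in Theorem~\ref{thm-dy182}) that produces $+\Tr(\delta)$ here in place of $-\Tr(\delta)$. Granting these two computations, the equation collapses to
\[
(a+\Tr(\delta))(x-c)=(\mbox{a constant depending only on }b),
\]
which has a unique solution whenever $a+\Tr(\delta)\ne0$, whereas for $a+\Tr(\delta)=0$ the left-hand side is independent of $x$ along the direction $(x-c)^q=-(x-c)$, so $h$ cannot be injective. Hence $h$ is a bijection, equivalently $f$ permutes $\mathbb{F}_{q^n}$, if and only if $\Tr(\delta)\ne -a$. Everything but the monomial bookkeeping in the two steps above is formally identical to Theorem~\ref{thm-dy182}, and that bookkeeping is the only genuine obstacle.
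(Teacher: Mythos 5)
Your plan---transplant the proof of Theorem \ref{thm-dy182} and track how the odd parity of the exponents of $g$ flips one sign---is exactly the route the paper has in mind (it omits this proof with the remark that it is ``similar to that of Theorem \ref{thm-dy182}''), and your two bookkeeping claims are both correct: the $(x-c)^2$ terms occur $k$ times in $g(x)$ and $k$ times in $g(x)^q$ and cancel in the difference, and the even exponent pairs $\{2i,2i+2k\}$ of $g(x)^q$ together with the odd pairs $\{2i-1,2i-1+2k\}$ of $g(x)$ exhaust all residues modulo $4k$, so the linear coefficient assembles into a full trace. Your treatment of the fibres (on $\psi^{-1}(s)$ the map $f$ is $x\mapsto \mbox{constant}+ax$, injective since $a\ne0$) is also fine.

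The gap is in the one step you asserted rather than computed: the identification of the induced map $h$. The theorem's polynomial is $f(x)=g^q(x^q-x+\delta)+ax$, i.e.\ $f(x)=\bigl(g(\psi(x))\bigr)^q+ax$. Using $g^{q^2}=g$ one gets
$$f(x)^q-f(x)=g(\psi(x))-g(\psi(x))^q+a(x^q-x),$$
so the induced map is $h(y)=g(y)-g(y)^q+ay-a\delta$: the $g$-part enters with the \emph{opposite} sign to the map $h(y)=g(y)^q-g(y)+ay-a\delta$ that you wrote down, which is the induced map of $g(\psi(x))+ax$ \emph{without} the outer $q$-th power. Re-running your own computation with the correct sign flips the linear coefficient from $a+\Tr(\delta)$ to $a-\Tr(\delta)$ and yields the criterion $\Tr(\delta)\ne a$. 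This cannot be patched within your argument: as a function on $\mathbb{F}_{q^{4k}}$, $g(y)^q=\sum_{i=1}^{k}y^{q^{2i}+q^{2i+2k}}$ (exponents read modulo $4k$) is precisely the $g$ of Theorem \ref{thm-dy182}, so under the literal reading the present statement would assert that the very same polynomial as in Theorem \ref{thm-dy182} permutes $\mathbb{F}_{q^n}$ if and only if $\Tr(\delta)\ne-a$, contradicting that theorem. The statement with conclusion $\Tr(\delta)\ne-a$ is true exactly when one reads $g^q(\cdot)$ as $g(\cdot)$---evidently the intended statement, and the reading your computation silently adopts---so your write-up proves the typo-corrected theorem but not the statement as printed, and the displayed derivation of $h$ from $f(x)^q-f(x)$ is false for that statement; this discrepancy must be addressed explicitly. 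A smaller point: the non-injectivity claim when $a+\Tr(\delta)=0$ deserves its one-line justification (take $w\ne0$ with $w^q=-w$; then $\Tr(w)=0$, so $x+w\in S$ whenever $x\in S$, and a direct expansion gives $h(x+w)-h(x)=w\bigl(a+\Tr(\delta)\bigr)=0$), although the paper's own proof of Theorem \ref{thm-dy182} is equally terse on this point.
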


 We also have the following theorem.

 \begin{theorem} Let $h(x)$ be a polynomial over $\mathbb{F}_{q^6}$ and $L(x)$ be a {\em $q$-polynomial} over $\gf_{q}$.  Then the polynomial
 $$f(x)=h(x^{q^2}-x^q+x+\delta)^{q^4}+h(x^{q^2}-x^q+x+\delta)^{q^3}-h(x^{q^2}-x^q+x+\delta)^{q}
 -h(x^{q^2}-x^q+x+\delta)+L(x)$$permutes $\mathbb{F}_{q^6}$ if and only if $L(x)$ permutes
$\mathbb{F}_{q^6}$, where $\delta \in \mathbb{F}_{q^6}$.
\end{theorem}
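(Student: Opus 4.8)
The plan is to apply Theorem~\ref{thm-main3} with $A=\mathbb{F}_{q^6}$, using the single additive map
$$\psi(x)=\bar{\psi}(x)=x^{q^2}-x^q+x,$$
which is a $q$-polynomial over $\mathbb{F}_q$, and setting $S=\bar{S}=\{x^{q^2}-x^q+x:x\in\mathbb{F}_{q^6}\}=\image(\psi)$, so that $\sharp(S)=\sharp(\bar{S})$ holds trivially and the surjectivity of $\psi,\bar{\psi}$ onto $S,\bar{S}$ is automatic. Writing $G(x)=h(x^{q^2}-x^q+x+\delta)=h(\psi(x)+\delta)$ and introducing the linearized polynomial $P(y)=y^{q^4}+y^{q^3}-y^q-y$, the given $f$ factors as $f(x)=P(G(x))+L(x)$. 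I would therefore put $u(x)=L(x)$ and $v(x)=P(G(x))$, so that $f=u+v$, and take the bottom arrow of the diagram to be the restriction of $L$ to $S$.

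The two hypotheses of Theorem~\ref{thm-main3} that require genuine checking are $\bar{\psi}(v(x))=0$ and the constancy of $v$ on each fibre $\psi^{-1}(s)$. The second is immediate: since $G(x)=h(\psi(x)+\delta)$ depends on $x$ only through $\psi(x)$, so does $v(x)=P(G(x))$, whence $v$ is constant on every coset $\psi^{-1}(s)$.

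The crux, and the one computation I expect to matter, is $\bar{\psi}(v(x))=0$. The idea is to pass to conventional $q$-associates: composition of $q$-polynomials over $\mathbb{F}_q$ corresponds to multiplication of their $q$-associates, and this correspondence is a ring isomorphism onto $(\mathbb{F}_q[z],+,\times)$, hence commutative. The $q$-associate of $\psi$ is $z^2-z+1$ and that of $P$ is $z^4+z^3-z-1=(z+1)(z-1)(z^2+z+1)$; their product is
$$(z^2-z+1)(z+1)(z-1)(z^2+z+1)=(z^4+z^2+1)(z^2-1)=z^6-1.$$
Consequently $\psi\circ P$ is the linearized polynomial $y^{q^6}-y$, which vanishes identically on $\mathbb{F}_{q^6}$; thus $\bar{\psi}(v(x))=\psi(P(G(x)))=G(x)^{q^6}-G(x)=0$. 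The same commutativity gives $\psi(L(x))=L(\psi(x))$, so that $\bar{\psi}(u(x)+v(x))=\psi(L(x))+\psi(v(x))=L(\psi(x))$, which simultaneously shows that $L$ maps $S$ into $\bar{S}$ and that the diagram commutes with bottom arrow $s\mapsto L(s)$.

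With all hypotheses verified, Theorem~\ref{thm-main3} yields at once that $f=u+v$ permutes $\mathbb{F}_{q^6}$ if and only if $u=L$ does. The main obstacle is purely the associate bookkeeping in the displayed factorization of $z^6-1$; everything else is formal, and the argument is structurally identical to the proofs of Corollary~\ref{pro-main1} and Theorem~\ref{thm-yd18}.
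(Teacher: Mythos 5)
Your proof is correct and takes essentially the same approach as the paper: the paper's own (sketched) proof applies Lemma~\ref{lem-1.1} to the commutative diagram built from the additive map $x^{q^2}-x^q+x$, and your use of Theorem~\ref{thm-main3} is just the packaged form of that same argument, with $\delta$ absorbed into $v$ rather than into $\psi$. The $q$-associate factorization $(z^2-z+1)(z^4+z^3-z-1)=z^6-1$, which forces $\bar{\psi}(v(x))=x^{q^6}-x\equiv 0$ on $\mathbb{F}_{q^6}$, is precisely the detail the paper omits, and you verify it correctly.
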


\begin{proof}
It follows from Lemma \ref{lem-1.1} and the following commutative diagram:
$$
\xymatrix{
  \mathbb{F}_{q^6} \ar[d]_{x^{q^2}-x^q+x+\delta} \ar[r]^{f}
                & \mathbb{F}_{q^6} \ar[d]^{x^{q^2}-x^q+x}  \\
  S \ar[r]_{L(x)-L(\delta)}
                & \bar{S}      }
$$
The details of the proof are omitted.
\end{proof}

\begin{theorem}
Let $h(x)$ be a polynomial over $\mathbb{F}_{q^6}$ and $L(x)$ be a {\em $q$-polynomial} over $\gf_{q}$. Then the polynomial
 $$f(x)=h(x^{q^2}+x^q+x+\delta)^{q^4}-h(x^{q^2}+x^q+x+\delta)^{q^3}+h(x^{q^2}-x^q+x+\delta)^{q}
 -h(x^{q^2}-x^q+x+\delta)+L(x)$$permutes $\mathbb{F}_{q^6}$ if and only if $L(x)$ permutes
$\mathbb{F}_{q^6}$, where $\delta \in \mathbb{F}_{q^6}$.
\end{theorem}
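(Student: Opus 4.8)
The plan is to absorb the four $h$-terms into a single additive coboundary and then run the argument through the trace map. Writing $A=x^{q^2}+x^q+x+\delta$ and $B=x^{q^2}-x^q+x+\delta$ and setting
$$W(x)=h(A)^{q^3}+h(B),$$
one checks at once that $W(x)^q-W(x)=h(A)^{q^4}+h(B)^q-h(A)^{q^3}-h(B)$ is precisely $f(x)-L(x)$, so that
$$f(x)=L(x)+\bigl(W(x)^q-W(x)\bigr).$$
This single identity is what replaces the common-argument device of the preceding theorem: although the first two and the last two $h$-terms carry the \emph{distinct} arguments $A$ and $B$, the chosen Frobenius exponents $q^4,q^3,q,1$ together with the signs $+,-,+,-$ are exactly those for which the four terms telescope into one coboundary $W^q-W$. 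In particular $\Tr(W^q-W)=0$, so the mixed arguments cause no difficulty at the level of the trace.

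First I would record that
$$\Tr\bigl(f(x)\bigr)=\Tr\bigl(L(x)\bigr)=c\,\Tr(x),\qquad c:=\sum_i a_i,$$
using $\Tr(W^q-W)=0$ together with the identity $\Tr(L(x))=\bigl(\sum_i a_i\bigr)\Tr(x)$ for the $q$-polynomial $L$ over $\gf_q$ (the first lemma of Section 2). Thus the square
$$
\xymatrix{
  \mathbb{F}_{q^6} \ar[d]_{\Tr} \ar[r]^{f}
                & \mathbb{F}_{q^6} \ar[d]^{\Tr}  \\
  \gf_q \ar[r]_{cx}
                & \gf_q      }
$$
commutes, both vertical maps are surjective, and $\Tr$ is additive. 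Applying Lemma \ref{lem-1.1} to this diagram, and to the same diagram with $f$ replaced by $L$, reduces both ``$f$ is a PP'' and ``$L$ is a PP'' to the identical pair of conditions: that $x\mapsto cx$ be a bijection of $\gf_q$, i.e.\ $c\ne0$, and that the map in question be injective on every fibre $\Tr^{-1}(s)$, $s\in\gf_q$. The theorem then follows once I show that $f$ is injective on each fibre of $\Tr$ if and only if $L$ is. I would also note that $\Tr$ is the natural, and essentially forced, choice of additive map here: any $q$-polynomial annihilating every coboundary $W^q-W$ is a scalar multiple of $\Tr$, so its large fibres cannot be avoided and Theorem \ref{thm-main3} does not apply directly.

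The main obstacle is exactly this fibrewise equivalence, and it is what separates the present theorem from its single-argument companion. There all four $h$-terms share one argument, the coboundary is a function of that argument and hence constant on the relevant fibres, and injectivity of $f$ and of $L$ on a fibre coincide for free; with $A\ne B$ the coboundary $W^q-W$ is genuinely non-constant on the fibres of $\Tr$, so this shortcut is lost. Instead I would analyse a collision directly: if $x,x'$ lie in one fibre and $f(x)=f(x')$, then with $y=x-x'$ and $\eta=W(x)-W(x')$ the linearity of $L$ yields the single relation
$$L(y)=\eta-\eta^{q},$$
and the goal is to force $y=0$. The plan is to expand $A(x)-A(x')$ and $B(x)-B(x')$ in terms of $y,y^q,y^{q^2}$, feed these into $\eta$, and prove that when $L$ is a permutation no nonzero $y$ with $\Tr(y)=0$ can satisfy $L(y)=\eta-\eta^{q}$, while when $L$ is not a permutation a genuine $f$-collision can be manufactured from a nonzero element of $\ker L\cap\ker\Tr$. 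I expect this fibrewise step, not the routine trace bookkeeping, to carry the real content of the proof, and to be where any hypothesis on the characteristic of $q$ must enter.
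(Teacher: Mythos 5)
Your telescoping identity $f-L=W^q-W$ with $W=h(A)^{q^3}+h(B)$ is correct, and your diagnosis that the mixed arguments $A\neq B$ destroy the constant-on-fibres mechanism of Theorem \ref{thm-main3} is sharp. But the proposal has a genuine gap, and it is fatal: the entire proof is deferred to the final fibrewise step (``force $y=0$''), which you only outline --- and that step cannot be carried out, because the statement read literally, with the two distinct arguments, is false. Take $q=3$, $h(x)=x$, $L(x)=x$, $\delta$ arbitrary. Then $A^{q^4}-A^{q^3}=x-x^{q^3}+\delta^{q^4}-\delta^{q^3}$ and $B^{q}-B=x^{q^3}-2x^{q^2}+2x^{q}-x+\delta^{q}-\delta$, so $f(x)=x+2x^{q}-2x^{q^2}+C$ for a constant $C$. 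Its conventional $q$-associate is $-2t^2+2t+1\equiv(t+1)^2 \pmod 3$, which shares the factor $t+1$ with $t^6-1$, so $f$ has a nontrivial kernel and is not a PP of $\gf_{3^6}$: explicitly, any $y\in\gf_9$ with $y^2=-1$ satisfies $y^q=-y$, $y^{q^2}=y$, hence $f(x+y)=f(x)$, while $L(x)=x$ certainly permutes. Moreover $\Tr(y)=0$, so this same $y$ refutes precisely the fibrewise equivalence you hoped to establish: with $c=1\neq0$, $L$ is injective on every fibre of $\Tr$ but $f$ is not, and no hypothesis on the characteristic rescues the plan.

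The resolution is that the paper's own (two-line) proof reads the theorem differently: its commutative diagram uses $\psi(x)=x^{q^2}+x^q+x+\delta$ and $\bar{\psi}(x)=x^{q^2}+x^q+x$, i.e.\ all four $h$-terms are meant to carry the \emph{same} argument $A=x^{q^2}+x^q+x+\delta$; the ``$-x^q$'' in the last two terms is evidently a slip copied from the preceding theorem, whose sign pattern $+,+,-,-$ pairs with the argument $x^{q^2}-x^q+x+\delta$ exactly as the pattern $+,-,+,-$ here pairs with $x^{q^2}+x^q+x+\delta$. (Indeed, with the mixed arguments the paper's own diagram does not commute: $\bar{\psi}(v)=(c-c^{q^3})-(d-d^{q^3})$ with $c=h(A)$, $d=h(B)$, which is not identically zero.) Under the uniform reading everything you declared unavailable returns: $v(x)=h(A)^{q^4}-h(A)^{q^3}+h(A)^{q}-h(A)$ is a function of $\psi(x)$, hence constant on each fibre $\psi^{-1}(s)$; a one-line telescoping check gives $v+v^q+v^{q^2}=0$, i.e.\ $\bar{\psi}(v)=0$; and since $L\in\gf_q[x]$ is a $q$-polynomial it commutes with $\bar{\psi}$, so the diagram commutes with bottom map $L(x)-L(\delta)$. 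Theorem \ref{thm-main3} then yields the equivalence at once, on fibres of size $q^2$ (as $t^2+t+1$ divides $t^6-1$), with none of the $\Tr$-fibre analysis you were forced into. In short, you proved the easy half (the identity and the trace bookkeeping) of the wrong statement; your own observation that only the full trace annihilates your coboundary, leaving unmanageably large fibres, was the signal that the printed formula needed emending rather than that the real work lay on the fibres of $\Tr$.
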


\begin{proof}
It follows from Lemma \ref{lem-1.1} and the following commutative diagram
$$
\xymatrix{
  \mathbb{F}_{q^6} \ar[d]_{x^{q^2}+x^q+x+\delta} \ar[r]^{f}
                & \mathbb{F}_{q^6} \ar[d]^{x^{q^2}+x^q+x}  \\
  S \ar[r]_{L(x)-L(\delta)}
                & \bar{S}      }
$$
The details of the proof are omitted.
\end{proof}

At the end of this section, we present the following more generic theorem on permutation polynomials.

\begin{theorem}\label{thm-Gener}
Let $n$ be a positive integer. Let
$
L(x) \in\mathbb{F}_{q^n}[x]
$
be a {\em $q$-polynomial} over $\gf_{q^n}$ such that $\gcd(l(x), x^n-1)\ne1$, where $l(x)$ is the associated polynomial of $L(x)$. Let $a \in \mathbb{F}_{q^n}^*$ be a solution of the equation $L(x)=0$ and $h(x)$ be a polynomial with $h(x)^q=h(x)$. Let $L_1(x)\in\mathbb{F}_q[x]$ be a linearized polynomial.
 Then for every $\delta\in\mathbb{F}_{q^n}$, the polynomial
$$f(x)=g(L(x)+\delta)+L_1(x)$$
permutes $\mathbb{F}_{q^n}$ if and only if $L_1(x)$ permutes
$\mathbb{F}_{q^n}$.
\end{theorem}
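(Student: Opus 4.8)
The plan is to deduce this theorem from Theorem~\ref{thm-main3} exactly as Corollary~\ref{pro-main1} was obtained, with the additive map $x^q-x$ used there replaced by the general $q$-polynomial $L$. Set $A=\gf_{q^n}$ and define
\[
\psi(x)=L(x)+\delta,\qquad \bar{\psi}(x)=L(x),
\]
with $S=\{L(x)+\delta: x\in\gf_{q^n}\}$ and $\bar{S}=\image(L)$. Since $L$ is a $q$-polynomial it is $\gf_q$-linear, so $\bar{\psi}$ is additive; both $\psi$ and $\bar{\psi}$ are surjective onto $S,\bar{S}$ by construction, and $\sharp S=\sharp\bar{S}=\sharp\,\image(L)$ because $S$ is a translate of $\bar{S}$. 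Writing $f(x)=v(x)+L_1(x)$ with $u(x)=L_1(x)$ and $v(x)$ the non-$L_1$ summand (the composite term built from $h$ and the root $a$), the task reduces to checking the three hypotheses of Theorem~\ref{thm-main3}: that $\bar{\psi}(v)\equiv 0$, that $v$ is constant on each fibre $\psi^{-1}(s)$, and that the square commutes for a suitable $h\colon S\to\bar{S}$.

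First I would justify the existence of a nonzero $a$. Applying Lemma~\ref{lem-gcdaa} to $L$ and the $q$-polynomial $x^{q^n}-x$ (whose $q$-associate is $x^n-1$), the roots of $L$ in $\gf_{q^n}$ are exactly the roots of the linearized $q$-associate of $\gcd(l(x),x^n-1)$; the hypothesis $\gcd(l(x),x^n-1)\neq 1$ therefore forces $\ker L\neq\{0\}$, giving $a\in\gf_{q^n}^{*}$ with $L(a)=0$. Next, $h^q=h$ forces $h$ to take values in $\gf_q$, so the summand $v$ takes its values in the $\gf_q$-line $a\,\gf_q$; since $L$ is $\gf_q$-linear and $L(a)=0$, this line lies in $\ker L$. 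Hence $\bar{\psi}(v(x))=L(v(x))=0$ for all $x$, the analogue of the identity $g(x^q-x+\delta)^q=g(x^q-x+\delta)$ used in Corollary~\ref{pro-main1}. That $v$ is constant on each $\psi^{-1}(s)$ is immediate, since $v$ is a function of $L(x)+\delta=\psi(x)$ alone.

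The commutativity of the square is the step I expect to be the crux. Because $\bar{\psi}(v)\equiv 0$ we have $\bar{\psi}(f(x))=\bar{\psi}(L_1(x))=L(L_1(x))$, so I must produce $h\colon S\to\bar{S}$ with $L(L_1(x))=h(\psi(x))$; equivalently $L\circ L_1$ must factor through $L$, which holds precisely when $L_1(\ker L)\subseteq\ker L$. Here the hypothesis $L_1\in\gf_q[x]$ enters: such a linearized polynomial commutes with the Frobenius, $L_1(x^q)=L_1(x)^q$. When $L$ also has coefficients in $\gf_q$, both $L$ and $L_1$ lie in the commutative ring of $q$-polynomials over $\gf_q$, so they commute and $L(L_1(y))=L_1(L(y))=0$ for every $y\in\ker L$, yielding the inclusion at once and letting $h$ be defined by $h(L(x)+\delta)=L(L_1(x))$. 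With the square commuting, all hypotheses of Theorem~\ref{thm-main3} are met, and that theorem gives that $f$ permutes $\gf_{q^n}$ if and only if $u=L_1$ permutes $\gf_{q^n}$, as claimed.

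The one point demanding genuine care --- and the likely location of an implicit hypothesis --- is the commutation $L\circ L_1=L_1\circ L$. A direct expansion shows $L(L_1(x))=\sum_{i,j}a_i b_j\,x^{q^{i+j}}$ while $L_1(L(x))=\sum_{i,j}a_i^{q^j}b_j\,x^{q^{i+j}}$, so the two agree for all $L_1$ only when every $a_i\in\gf_q$; for a general $L\in\gf_{q^n}[x]$ the commutation, and indeed the $\phi$-stability of $\ker L$ (where $\phi\colon x\mapsto x^q$), can fail. Accordingly I would isolate as the key lemma the statement that $\ker L$ is stable under $L_1$, verify it under the natural sufficient condition that $L$ be $\gf_q$-rational (equivalently that $\ker L$ be Frobenius-stable), and treat establishing this invariance as the main obstacle of the proof.
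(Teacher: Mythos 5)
Your proposal follows the paper's own proof essentially step for step: the paper likewise invokes Theorem~\ref{thm-main3} with $A=\gf_{q^n}$, $\psi(x)=L(x)+\delta$, $\bar{\psi}(x)=L(x)$, $u(x)=L_1(x)$, $v(x)=g(L(x)+\delta)$ where $g(x)=ah(x)$, and the bottom map $L_1(x)-L_1(\delta)$, and then simply asserts that ``one can verify'' all hypotheses. The crux you isolated---that commutativity of the square needs $L\circ L_1=L_1\circ L$ as maps on $\gf_{q^n}$, equivalently $L_1(\ker L)\subseteq \ker L$, which is guaranteed exactly when one reads the gcd hypothesis (via Lemma~\ref{lem-gcdaa}, which only applies to $q$-polynomials over $\gf_q$) as forcing $l(x)$, hence $L(x)$, to have coefficients in $\gf_q$---is precisely the verification the paper leaves implicit, so your account is correct and in fact sharper on the one point the paper glosses over.
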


\begin{proof}
To prove this theorem with Theorem \ref{thm-main3}, we define $A=\gf_{q^n}$ and
$$
\psi(x)=L(x)+\delta, \ \bar{\psi}(x)=L(x), \ S=\{\psi(x): x \in A\}, \  \bar{S}=\{\bar{\psi}(x): x \in A\}.
$$
We further define
$$
u(x)=L_1(x), \ v(x)=g(L(x)+\delta), \mbox{ and } h(x)=L_1(x) - L_1(\delta).
$$

With the assumptions in the theorem, we known that $g(x)=ah(x)$ is a polynomial such that $L(g(x))=0$. One
can verify that all the conditions in Theorem \ref{thm-main3} are satisfied. The desired conclusion then follows
from Theorem \ref{thm-main3}.
\end{proof}

Theorem \ref{thm-Gener} is a generalization of Theorem 5.6(a) in \cite{AGW}.

\section{Permutation polynomials of the form $(ax^{q^k}-bx+\delta)^{\frac{q^n+1}{2}}+ax^{q^k}+bx$}\label{sec-explicitpps}

In this section, we investigate permutation polynomials of the form $(ax^{q^k}-bx+\delta)^{\frac{q^n+1}{2}}+ax^{q^k}+bx$,
and generalize the permutation polynomials of the same form described in \cite{ZZH}, \cite{ZH}, and \cite{LHT}.

Let $\alpha$ be a primitive element of $\mathbb{F}_{q^n}$, where $q$ is a prime power, define $D_0=<\alpha^2>$, the multiplicative group generated by $\alpha^2$, and
$D_1=\alpha D_0$. Then $\mathbb{F}_{q^n}=\{0\}\cup D_0\cup D_1$.

\begin{theorem} Let $q$ be an odd prime power, $n, k$ be positive integers, $a, b,  \delta\in\mathbb{F}_{q^n}$, $ab\ne0$. Then  $(ax^{q^k}-bx+\delta)^{\frac{q^n+1}{2}}+ax^{q^k}+bx$ is a permutation polynomial over $\mathbb{F}_{q^n}$ if and only if $ab\in D_0$.
\end{theorem}
\begin{proof} For any given $u\in\mathbb{F}_{q^n}$, we consider  the following equation
\begin{equation}\label{eqn-40}(ax^{q^k}-bx+\delta)^{\frac{q^n+1}{2}}+ax^{q^k}+bx=u.\end{equation} Assume that $x$ is a solution to (\ref{eqn-40}), we distinguish among the following three cases.

{\bf Case 1:} $ax^{q^k}-bx+\delta=0$. By (\ref{eqn-40}), we have $ax^{q^k}+bx=u$. Then these two equations lead to $x=\frac{1}{2b}(u+\delta)$ and $x^{q^k}=\frac{1}{2a}(u-\delta)$ which imply
$$\frac{a}{b^{q^k}}(u+\delta)^{q^k}=u-\delta.$$

{\bf Case 2:} $ax^{q^k}-bx+\delta\in D_0$. In this case, (\ref{eqn-40}) is reduced to $ax^{q^k}-bx+\delta+ax^{q^k}+bx=u$, i.e., $x^{q^k}=\frac{1}{2a}(u-\delta)$. Then we have
$x=\frac{1}{2}\left(\frac{u-\delta}{a}\right)^{q^{n-k}}$ and
$$ax^{q^k}-bx+\delta=\frac{1}{2}(u+\delta)-\frac{b}{2}\left(\frac{u-\delta}{a}\right)^{q^{n-k}}.$$

{\bf Case 3:} $ax^{q^k}-bx+\delta\in D_1$. In this case, (\ref{eqn-40}) is reduced to $-(ax^{q^k}-bx+\delta)+ax^{q^k}+bx=u$, i.e., $x=\frac{1}{2b}(u+\delta)$. Then we have
$$ax^{q^k}-bx+\delta=\frac{a}{2}\frac{(u+\delta)^{q^k}}{b^{q^k}}-\frac{1}{2}(u-\delta).$$

If we denote $\Delta=\frac{1}{2}(u+\delta)-\frac{b}{2}\left(\frac{u-\delta}{a}\right)^{q^{n-k}}$ and $\Delta_1=\frac{a}{2}\frac{(u+\delta)^{q^k}}{b^{q^k}}-\frac{1}{2}(u-\delta)$, then
$$\Delta^{q^k}a=\Delta_1 b^{q^k}.$$

First, if Case 1 occurs, i.e., $\Delta=\Delta_1=0$, then both Case 2 and Case 3 cannot happen.

If $ab\in D_0$ and $u$ is an element such that $\Delta\in D_0$, then (\ref{eqn-40}) has a solution $x=\frac{1}{2}\left(\frac{u-\delta}{a}\right)^{q^{n-k}}$. If $ab\in D_0$ and $u$ is an element such that $\Delta\in D_1$, then $\Delta_1\in D_1$ and  (\ref{eqn-40}) has a solution $x=\frac{1}{2b}(u+\delta)$. This implies that $(ax^{q^k}-bx+\delta)^{\frac{q^n+1}{2}}+ax^{q^k}+bx$ is a permutation polynomial over $\mathbb{F}_{q^n}$

 If $ab\in D_1$ and $u$ is an element such that $\Delta\in D_0$, then $\Delta_1\in D_1$, and so (\ref{eqn-40}) has two solutions $x=\frac{1}{2}\left(\frac{u-\delta}{a}\right)^{q^{n-k}}$ and $x=\frac{1}{2b}(u+\delta)$. If $ab\in D_1$ and $u$ is an element such that $\Delta\in D_1$, then $\Delta_1\in D_0$ and  (\ref{eqn-40}) has no solutions. This completes the proof.\end{proof}

\section{Summary and concluding remarks}

Recently, it has been a hot topic to construct permutation polynomials over finite fields of specific forms \cite{CH,CK08,CK09,DXY,Ky10, LHT,Hou,HMSY,Wa07,YD,Ma10}. The main contributions
of this paper are the general theorems on permutation polynomials described in Section \ref{sec-mainthms}
and explicit permutation polynomials documented in Sections \ref{sec-mainthms}  and
\ref{sec-explicitpps}. Many of the results presented in this paper are extensions and generalizations
of earlier results on permutation polynomials in the references of this paper.

To employ the theorems in this paper for the construction of more permutation polynomials, we need
to construct linearized permutation polynomials $L(x)$. Let $l(x)$ be any polynomial of degree at most
$n-1$ over $\gf_q$ with $\gcd(l(x), x^n-1)$, and let $L(x)$ denote its $q$-associate. It then follows
from Lemma \ref{lem-gcdaa} that $L(x)$ is
a linearized PP over $\gf_{q^n}$. The reader is referred to \cite{YD} for further information on this method.

\end{document}